\newcommand\bbR{\mathbb{R}}
\newcommand\bbN{\mathbb{N}}
\newcommand\bbC{\mathbb{C}}
\newcommand\bbG{\mathbb{G}}
\newcommand\bx{\bm{x}}
\newcommand\bu{\bm{u}}
\newcommand\be{{e}}
\newcommand\bn{\bm{n}}
\newcommand\bw{\bm{w}}
\newcommand\bk{\bm{k}}
\newcommand\bq{\bm{q}}
\newcommand\bzero{{0}}
\newcommand\bA{ {\bf A}}
\newcommand\bB{ {\bf B}}
\newcommand\bD{ {\bf D}}
\newcommand\bE{ {\bf E}}
\newcommand\bI{ {\bf I}}
\newcommand\bK{ {\bf K}}
\newcommand\bM{ {\bf M}}
\newcommand\bP{ {\bf P}}
\newcommand\bPb{{{\bf P}_b}}
\newcommand\bPp{{{\bf P}_p}}
\newcommand\bQ{ {\bf Q}}
\newcommand\bS{ {\boldsymbol{S}}}
\newcommand\bT{ {\bf T}}
\newcommand\bU{ {\boldsymbol{U}}}
\newcommand\bV{ {\boldsymbol{V}}}
\newcommand\bv{\bm{v}}
\newcommand\dd{\,\mathrm{d}}
\newcommand\bxi{\bm\xi}
\newcommand\balpha{\alpha}
\newcommand\bbeta{\beta}
\newcommand\bLambda{{\bf \Lambda}}
\newcommand\imag{\boldsymbol{\mathrm{i}}}
\newcommand\mE{{\mathcal{E}}}
\newcommand\Healphau{\text{He}_{\balpha}^{[\bu,\theta]}}
\newcommand\Halphau{\mathcal{H}_{\balpha}^{[\bu,\theta]}}
\newcommand\Hbetau{\mathcal{H}_{\bbeta}^{[\bu,\theta]}}
\newcommand\Hgammau{\mathcal{H}_{\gamma}^{[\bu,\theta]}}
\newcommand\omegau{\omega^{[\bm{u},\theta]}}
\newcommand\pd[2]{\dfrac{\partial {#1}}{\partial {#2}}}
\def\pkgnmused{1}
\begin{document}


\markboth{Y.-N. Di, Y.-W. Fan, R. Li and L.-C. Zheng}{Linear Stability}
\title{Linear Stability of Hyperbolic Moment Models for Boltzmann
  Equation}


%
%
%
\author[Y.-N. Di, Y.-W. Fan, R. Li and L.-C. Zheng]{Yana
  Di\affil{1}, Yuwei Fan\affil{2}, Ruo
  Li\affil{3}\comma\corrauth and Lingchao Zheng\affil{2}}
\address{\affilnum{1} LSEC, Institute of Computational Mathematics and
  Scientific/Engineering Computing, NCMIS, AMSS, Chinese Academy
  of Sciences, Beijing, China.\\
  \affilnum{2} School of Mathematical Sciences, Peking
  University, Beijing, China.\\
  \affilnum{3} HEDPS \& CAPT, LMAM \& School of Mathematical Sciences,
  Peking University, Beijing, China.} 
\emails{ {\tt yndi@lsec.cc.ac.cn} (Yana Di), {\tt ywfan@pku.edu.cn}
  (Yuwei Fan), {\tt rli@math.pku.edu.cn} (Ruo Li), {\tt
    lczheng@pku.edu.cn} (Lingchao Zheng) }

\begin{abstract}
  Grad's moment models for Boltzmann equation were recently regularized
  to globally hyperbolic systems, and thus the regularized models attain
  local well-posedness for Cauchy data. The hyperbolic regularization
  is only related to the convection term in Boltzmann equation. We in
  this paper studied the regularized models with the presentation of
  collision terms. It is proved that the regularized models are linearly
  stable at the local equilibrium and satisfy Yong's first
  stability condition with commonly used approximate collision terms,
  and particularly with Boltzmann's binary collision model. 
\end{abstract}

\if\pkgnmused1

\keywords{Boltzmann equation, Grad's moment method, hyperbolic moment
equation, linear stability.}

\ams{65M10, 78A48}

\maketitle

\fi

\section{Introduction}
Boltzmann equation \cite{Boltzmann} is the most important kinetic
equation, governing the movement of a particle system, particularly
the gas particles. Since the distribution function in the Boltzmann
equation is in very high dimension, Grad \cite{Grad} purposed the
famous moment method for gas kinetic theory to reduce the kinetic
equation into low-dimensional models. In more than half a century,
Grad's moment equations were suffered by the lack of hyperbolicity
\cite{Muller,Grad13toR13}. Only very recently, in \cite{Fan, Fan_new}, 
the authors revealed the underlying reason that Grad's moment equations
lost its hyperbolicity during the model reduction, and 
purposed new reduced models of Boltzmann equation. The new models are
referred to as globally Hyperbolic Moment Equations (HME) hereafter, which
are symmetric quasi-linear systems \cite{framework} with global
hyperbolicity.

As new models for fluid dynamics, one may prefer to carry out studies
on some fundamental mathematical properties on HME before further
numerical applications. Among these fundamental mathematical
properties, linear stability is one of the most important points
\cite{Bobylev,RosenauCE,Struchtrup2003} for a system to be applied in
numerical experiments. It should be noted that the linear stability is not
automatically attained for models in fluid dynamics. For instance,
famous Burnett equations and super-Burnett equations are discovered
not linearly stable \cite{Bobylev,Struchtrup}, and thus are ill-posed
and rarely have practical applications.

Except for linear stability, Yong proposed the called Yong's first 
stability condition \cite{yongdissertation,yong1999singular}, 
for nonlinear first-order hyperbolic systems with
source term. With this stability condition, a formal asymptotic
approximation of the initial-layer solution to the nonlinear problem
has been constructed \cite{yong1999singular}. Furthermore, with
some regularity assumption of the solution, the existence of classical
solutions is guaranteed in the uniform time interval. The stability
condition is essential for the nonlinear first-order hyperbolic
system. And in \cite{yongdissertation,yong1999singular}, several
classical models have been verified to satisfy the stability condition.

In this paper, we focus on the linear stability analysis of HME at
local equilibrium and Yong's first stability condition. The collision
term under consideration includes the commonly used approximate
formations, such as BGK model \cite{BGK}, ES-BGK model \cite{Holway},
Shakhov model \cite{Shakhov} and the original Boltzmann's collision
term \cite{Boltzmann}, particularly the binary collision term
\cite{Cowling, Grad1949note}. We prove that both HME and Ordered
globally Hyperbolic Moment Equations(OHME) are linearly stable at local
equilibrium for all the four collision models, and satisfy Yong's
first stability condition.

We start with a brief review of HME and the collision term to be
considered. The globally hyperbolic regularization enables us to write
HME into an elegant quasi-linear form. It is essential to expand the
distribution function at the local equilibrium, where the collision
term vanishes. This property provides us some additional equalities which
significantly simplify the linear stability analysis. For the binary
collision model, the symmetry of the collision plays an important
role, which indicates some induced symmetry in the Jacobian of the
collision term. With some linear algebra, we proved that HME is linear
stable at local equilibrium for all the four collision models. This
proof is not trivial noticing that HME we are studying is for
arbitrary order.

For Yong's first stability condition, the third inequality plays a
major role. We verified this inequality by applying the results in the
linear stability analysis, together with some linear algebraic
technique. In such sense, Yong's first stability condition can be
regarded as an enhanced version of linear stability for nonlinear
balance laws.

OHME, first proposed in \cite{Fan2015}, is the hyperbolic version of
ordered Grad's moment system, which includes the well-known Grad's 13
moment system. Since OHME can be derived from HME, the linear
stability of OHME at the local equilibrium is deduced from that of
HME, as well as Yong's first stability condition.

The rest of the paper is organized as following.  Section
\ref{sec:Preliminaries} presents a brief introduction of the linear
stability and some useful linear algebraic results.  The Boltzmann
equation and Grad's moment method, together with the globally
hyperbolic moment system are reviewed in Section \ref{sec:hme}.  In
Section \ref{sec:stability}, four Boltzmann collision terms are
studied, and the linear stability of HME at local equilibrium is
rigorous proved. In Section \ref{sec:Yong}, Yong's first stability
condition is verified for HME.  We extend the results in Section
\ref{sec:stability} and Section \ref{sec:Yong} to OHME and prove that OHME
is also linearly stable at local equilibrium and satisfies Yong's
first stability condition in Section \ref{sec:grad13}.  The paper ends
with a conclusion.


\section{Preliminaries} \label{sec:Preliminaries}
\subsection{Linear stability}
Let us consider the linear PDEs with source term as
\begin{equation}\label{eq:balancelaw}
    \pd{\bU}{t}+\sum_{d=1}^D\bA_d\pd{\bU}{x_d}=\bQ\bU,
\end{equation}
where the matrices $\bA_d$, $d=1,\dots,D$, and $\bQ$ are constant.
Following \cite{Struchtrup}, we assume the solution is plane waves of the form 
\begin{equation}
    \bU = {\bU}_*\exp\left( \imag(\Omega t-\bk^T\bx) \right),
\end{equation}
where $\imag$ is the imaginary unit, ${\bU}_*$ is the complex
amplitude of the wave, $\Omega$ is its frequency and $\bk$ is its wave
number. Here we use complex variables for convenience, and only the
real parts of the expressions for the $\bU$ are relevant. The
equation \eqref{eq:balancelaw} can be rewritten as
\begin{equation}
    \left(\imag\Omega\bI-\sum_{d=1}^D\imag k_d\bA_d-\bQ\right){\bU}_*=0,
\end{equation}
where $\bI$ is the identity matrix.
The existence of a nontrivial solution $\bU_*$ of the equation
requires the coefficient matrix to be singular
\begin{equation} \label{singularity}
	\det\left(\Omega \bI - \sum_{d=1}^D k_d\bA_d+\imag\bQ\right)=0.
\end{equation}
This gives us the dispersion relation between $\Omega$ and $\bk$.

Considering a disturbance in space, the wave number $\bk$ is real
and the frequency is complex $\Omega=\Omega_r(\bk)+\imag\Omega_i(\bk)$.
Then the plane wave solutions have the form 
\[
    \bU=\bU_*\exp(-\Omega_i(\bk)t)\exp(\imag(\Omega_r(\bk)t-\bk^T\bx)).
\]
Note that $\bU_*\exp(-\Omega_i(\bk)t)$ is the local amplitude of $\bU$
as a function of time, and stability requires the local amplitude to
be non-increasing, thus $\Omega_i(\bk) \geq 0$.

If we consider a disturbance in time at a given location, the
frequency $\Omega$ is real and the wave number is complex
$k=k_r(\Omega)+\imag k_i(\Omega)$, where we consider this problem for
one-dimensional processes following \cite{Struchtrup,Struchtrup2003}.
Then the plane wave solutions is
\[
    \bU=\bU_*\exp(k_{i}(\Omega)x)\exp(\imag(\Omega t-k_{r}(\Omega)x)).
\]
Here $\bU_*\exp(k_{i}(\Omega)x)$ is the amplitude of $\bU$ at the
point $x$. To be a stable solution, which is a wave traveling in
positive $x$ direction($k_{r}>0$), it requires a non-increasing
amplitude ($k_{i} \leq 0$), and vice versa, thus $k_{r} k_{i} \leq 0$.

\begin{definition}[Stability]
  The system \eqref{eq:balancelaw} is stable in time if
  $\Omega_i(\bk)\geq0$ for each $\bk\in\bbR^D$; it is stable in space
  for one-dimensional processes if $k_{r}(\Omega)k_{i}(\Omega)\leq0$
  for each $\Omega\in\bbR^+$.
\end{definition}

\subsection{Yong's first stability condition}
In \cite{yong1999singular}, Yong developed a singular perturbation
theory for initial-value problems of nonlinear first-order hyperbolic
system with stiff source term in several space variables, and proposed
the stability condition. Under the stability condition, a formal
asymptotic approximation of the initial-layer solution to the
nonlinear problem are constructed. Moreover, with some regularity
assumption on the solution, the existence of classical solutions is
guaranteed in uniform time interval. The stability condition is
fundamental for the nonlinear first-order hyperbolic system with the
form
\begin{equation}
    \pd{\bU}{t}+\sum_{d=1}^D\bA_d(\bU)\pd{\bU}{x_d}=\bS(\bU),\quad 
    \bU\in\bbG\subset\bbR^n.
\end{equation}
Let $\bQ=\pd{\bS}{\bU}$, and define the equilibrium manifold
\[
    \mE:=\{\bU\in\bbG\mid\bS(\bU)=0\}.
\]
The stability condition in \cite{yong1999singular} reads {\it
\begin{enumerate}
    \item There is an invertible $n \times n$ matrix $\bP(\bU)$ and an
        invertible $r\times r$ matrix $\hat{\bQ}(\bU)$, defined on the
        equilibrium manifold $\mE$, such that
        \begin{equation}\label{eq:Yong1}
            \bP(\bU)\bQ(\bU) =
            \left(\begin{array}{cc}
                0 & 0\\
                0 & \hat{\bQ}(\bU)
            \end{array}\right) 
            \bP(\bU)\quad \text{ for }
            \bU\in\mE.
        \end{equation}
  \item There is a symmetric positive definite matrix $\bA_0(\bU)$
      such that
      \begin{equation}\label{eq:Yong2}
          \bA_0(\bU)\bA_d(\bU)=\bA_d^T(\bU)\bA_0(\bU),
          \qquad\bU\in\bbG,~ d=1,\dots,D.
      \end{equation}
  \item The hyperbolic part and the source term are coupled in the sense
      \begin{equation}\label{eq:Yong3}
          \bA_0(\bU)\bQ(\bU)+\bQ(\bU)^T\bA_0(\bU)\leq-\bP(\bU)^T
          \left(\begin{array}{cc}
              0 & 0\\
              0 & \mathbf{I}_r
          \end{array}\right)
          \bP(\bU).
      \end{equation}
\end{enumerate}}
The first condition requires that the source term is dissipation or
relaxation, and the second condition guarantees that the hyperbolic
part is a symmetric hyperbolic system. The third condition specifies
how the hyperbolic part and the source term can be coupled, which is
the key condition to the stability of the solution.

\subsection{Two lemmas}
At the end of the section, we give two useful lemmas in linear
algebra for usage later on.
\begin{lemma} \label{eigenvaluenonnegative}
  Matrices $\bA$, $\bB \in \bbR^{n\times n}$ are symmetric, and $\bB$
  is negative semi-definite, then each eigenvalue of the matrix $\bA
  -\imag\bB$ has a non-negative imaginary part.
\end{lemma}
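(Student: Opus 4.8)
The plan is to reduce the claim to a statement about a Hermitian matrix. Since $\bA$ is real symmetric and $\bB$ is real symmetric, the matrix $\bM := \bA - \imag\bB$ satisfies $\bM^* = \bA^T + \imag\bB^T = \bA + \imag\bB$, so $\bM$ is generally not Hermitian; but its Hermitian part $\tfrac12(\bM + \bM^*) = \bA$ is real symmetric and its anti-Hermitian part is $-\imag\bB$. The key observation is the standard fact that for any square complex matrix $\bM$ and any eigenvalue $\lambda$ with unit eigenvector $\bv$ (so $\bM\bv = \lambda\bv$, $\bv^*\bv = 1$), one has $\lambda = \bv^*\bM\bv$, hence
\begin{equation*}
    \operatorname{Im}\lambda = \operatorname{Im}(\bv^*\bM\bv) = \operatorname{Im}\bigl(\bv^*\bA\bv - \imag\,\bv^*\bB\bv\bigr).
\end{equation*}

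First I would note that $\bv^*\bA\bv$ is real: since $\bA$ is real symmetric it is Hermitian, so $\bv^*\bA\bv = \overline{\bv^*\bA\bv}$. Likewise $\bv^*\bB\bv$ is real, and moreover $\bv^*\bB\bv \le 0$ because $\bB$ is negative semi-definite (this is the standard extension of negative semi-definiteness of a real symmetric matrix to complex vectors: writing $\bv = \bx + \imag\by$ with $\bx,\by\in\bbR^n$, one gets $\bv^*\bB\bv = \bx^T\bB\bx + \by^T\bB\by \le 0$). Therefore
\begin{equation*}
    \operatorname{Im}\lambda = \operatorname{Im}\bigl(\bv^*\bA\bv\bigr) - \operatorname{Re}\bigl(\bv^*\bB\bv\bigr) = 0 - \bv^*\bB\bv = -\bv^*\bB\bv \ge 0,
\end{equation*}
which is exactly the desired conclusion.

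There is really no serious obstacle here; the only point requiring a small amount of care is the passage from real negative semi-definiteness of $\bB$ to the inequality $\bv^*\bB\bv\le 0$ for complex $\bv$, which I would dispatch with the $\bv = \bx+\imag\by$ decomposition as above (the cross terms $\imag(\bx^T\bB\by - \by^T\bB\bx)$ cancel because $\bB$ is symmetric). Everything else is the elementary Rayleigh-quotient identity $\lambda = \bv^*\bM\bv/\bv^*\bv$. So the proof is a short direct computation rather than an argument with a hard step.
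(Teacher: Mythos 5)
Your proof is correct. The underlying computation is the same one the paper uses---sandwich the eigenvalue equation with $\overline{\bv}^T$ and exploit the fact that the Hermitian forms $\bv^*\bA\bv$ and $\bv^*\bB\bv$ of real symmetric matrices are real, with $\bv^*\bB\bv\le 0$ verified via the decomposition $\bv=\bv_r+\imag\bv_i$---but you organize it directly while the paper argues by contradiction. The paper assumes $\operatorname{Im}\lambda=b<0$, writes $[(\bA-a\bI)-\imag(\bB+b\bI)]\bv=\bm{0}$, extracts $\overline{\bv}^T(\bB+b\bI)\bv=0$, and uses the \emph{strict} negative definiteness of $\bB+b\bI$ (which is where $b<0$ is needed) to force $\bv=\bm{0}$. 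Your Rayleigh-quotient version is cleaner: it needs no contradiction, no shift by $a\bI$ or $b\bI$, and it yields the sharper quantitative statement $\operatorname{Im}\lambda=-\bv^*\bB\bv/\bv^*\bv\ge 0$ rather than merely the sign. The one point you rightly flag---that negative semi-definiteness of a real symmetric $\bB$ extends to complex vectors, with the cross terms cancelling by symmetry of $\bB$---is handled the same way in the paper (there phrased as $\bv_r^T(\bB+b\bI)\bv_r+\bv_i^T(\bB+b\bI)\bv_i=0$). No gaps.
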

\begin{proof}
  We prove it by contradiction. Suppose that $\lambda = a + b \imag$, $a$,
  $b \in \bbR$, is an eigenvalue of matrix $\bA -\imag\bB$, and $b<0$,
  with the corresponding eigenvector $\bv \in \bbC^n$, then
  \[    
      [(\bA - a \bI) - \imag (\bB + b \bI)] \bv = \bm{0}.
  \]
  Denote $\bv$ by $\bv=\bv_r + \imag \bv_i$, $\bv_r$, $\bv_i \in \bbR^n$.
  Multiplying the upper equation by $\overline{\bv} = \bv_r - \imag
  \bv_i$, we obtain
  \[
      \overline{\bv}^T (\bA - a \bI) \bv - \imag \overline{\bv}^T (\bB + b
      \bI) \bv = 0.
  \]
  Noticing that $\overline{\bv}^T (\bA - a \bI) \bv$,
  $\overline{\bv}^T (\bB + b \bI) \bv \in \bbR$, we have
  $\overline{\bv}^T (\bB + b \bI) \bv = 0$. Direct calculations yield
  that $\bv_r^T (\bB + b \bI) \bv_r + \bv_i^T (\bB + b \bI) \bv_i =
  0$. Since $\bB + b \bI$ is symmetric negative definite, $\bv_r$ and
  $\bv_i$ have to vanish, and thus $\bv=\bm{0}$. This contradiction
  ends the proof.
\end{proof}

\begin{lemma} \label{eigenvaluek} 
    Matrices $\bA$, $\bB \in \bbR^{n\times n}$ are symmetric, and
    $\bB$ is negative semi-definite. Let $k = k_r + i k_i \in \bbC$,
    $k_r$, $k_i \in \mathbb{R}$ be the solution of 
    $\det(k \bA - \imag \bB -\lambda \bI) = 0$, for any given
    $0<\lambda \in \mathbb{R}$, then $k_r k_i\leq 0$.
\end{lemma}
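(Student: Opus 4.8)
The plan is to imitate the proof of Lemma~\ref{eigenvaluenonnegative}: pass from the singularity of the matrix to the real and imaginary parts of an associated Hermitian quadratic form. Since $\det(k\bA-\imag\bB-\lambda\bI)=0$, there is a vector $\bv\in\bbC^n$, $\bv\neq\bm{0}$, with
\[
  k\bA\bv = \lambda\bv + \imag\bB\bv.
\]
Multiplying on the left by $\overline{\bv}^T$ gives
\[
  k\,\overline{\bv}^T\bA\bv = \lambda\,\overline{\bv}^T\bv + \imag\,\overline{\bv}^T\bB\bv .
\]
Because $\bA$ and $\bB$ are real symmetric, the scalars $a:=\overline{\bv}^T\bA\bv$ and $b:=\overline{\bv}^T\bB\bv$ are real; writing $\bv=\bv_r+\imag\bv_i$ as in Lemma~\ref{eigenvaluenonnegative}, the cross terms cancel and $b=\bv_r^T\bB\bv_r+\bv_i^T\bB\bv_i\leq 0$ since $\bB$ is negative semi-definite. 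Also $\|\bv\|^2:=\overline{\bv}^T\bv>0$.

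Next I would rule out $a=0$: if $a=0$ the identity reduces to $0=\lambda\|\bv\|^2+\imag b$, whose real part forces $\lambda\|\bv\|^2=0$, which is impossible because $\lambda>0$ and $\bv\neq\bm{0}$. Hence $a\neq 0$ and
\[
  k = \frac{\lambda\|\bv\|^2 + \imag\, b}{a},
\]
so that $k_r = \lambda\|\bv\|^2/a$ and $k_i = b/a$. Therefore
\[
  k_r k_i = \frac{\lambda\,\|\bv\|^2\, b}{a^2} \leq 0,
\]
since $\lambda>0$, $\|\bv\|^2>0$, $b\leq 0$ and $a^2>0$, which is the claim.

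The argument is essentially routine linear algebra once the quadratic-form identity is written down; the only genuine point of care is excluding $a=0$, and this is exactly where the hypothesis $\lambda>0$ (rather than merely $\lambda\neq 0$ or $\lambda\in\bbC$) is used. The remaining bookkeeping — that $a$ and $b$ are real and that the imaginary part of $\overline{\bv}^T\bB\bv$ vanishes — is identical to the corresponding step in the proof of Lemma~\ref{eigenvaluenonnegative} and presents no difficulty.
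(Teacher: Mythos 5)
Your proof is correct and follows essentially the same route as the paper: take a nonzero null vector $\bv$ of $k\bA-\imag\bB-\lambda\bI$, use that $\overline{\bv}^T\bA\bv$ and $\overline{\bv}^T\bB\bv$ are real to split the quadratic-form identity into real and imaginary parts, and conclude $k_rk_i=\lambda\|\bv\|^2\,\overline{\bv}^T\bB\bv/(\overline{\bv}^T\bA\bv)^2\leq 0$. Your explicit exclusion of $\overline{\bv}^T\bA\bv=0$ is a point the paper leaves implicit (it follows there from $k_r\overline{\bv}^T\bA\bv=\lambda\overline{\bv}^T\bv>0$), but the argument is the same.
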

\begin{proof}
    Let $\bv \in \bbC^n$, $\| \bv \| \neq 0$ be an vector s.t.
    $(k \bA - \imag \bB - \lambda \bI) \bv = \bm{0}$, then
    $\overline{\bv}^T (k_r \bA - \lambda \bI) \bv + \imag \overline{\bv}^T
    (k_i \bA - \bB) \bv = 0$, which indicates
    $k_r \overline{\bv}^T \bA \bv = \lambda \overline{\bv}^T \bv > 0$
    and $k_i \overline{\bv}^T \bA \bv = \overline{\bv}^T \bB \bv \leq
    0$. Thus, $k_rk_i\leq 0$.
\end{proof}


\section{HME for Boltzmann Equation}
\label{sec:hme}
Let us denote the distribution function in gas kinetic theory by
$f(t,\bx,\bxi)$ describing the probability density to find a particle
at space point $\bx$ and the time $t$ with velocity $\bxi$ in
$D$-dimensional space. The macroscopic density $\rho$, flow velocity
$\bu$, temperature $T$, pressure tensor $P=(p_{ij})_{D\times D}$,
stress tensor $\Sigma=(\sigma_{ij})_{D\times D}$ and heat flux $\bq$
are related to the distribution function by
\begin{equation}
    \begin{aligned}
        \rho(t,\bx)&=\int_{\bbR^D}f(t,\bx,\bxi)\dd\bxi,   &
        P&=\int_{\bbR^D}(\bxi-\bu)\otimes(\bxi-\bu)f(t,\bx,\bxi)\dd\bxi,\\
        \rho(t,\bx)\bu(t,\bx)&=\int_{\bbR^D} \bxi f(t,\bx,\bxi)\dd\bxi,   &
        \bq&=\frac{1}{2}\int_{\bbR^D}|\bxi-\bu|^2(\bxi-\bu)f(t,\bx,\bxi)\dd\bxi,\\
        D \rho RT
        &=\int_{\bbR^D}|\bxi-\bu|^2 f(t,\bx,\bxi)\dd\bxi, &
        \Sigma&=P-p\bI, 
    \end{aligned}
\end{equation}
where $p= \displaystyle \frac{1}{D}\sum_{d=1}p_{dd}=\rho RT$ is
pressure, and the constant $R$ stands for the gas constant. For
convenience, use $\theta(t,\bx) = R T(t,\bx)$ to simplify the
notations.

The distribution function $f(t,\bx,\bxi)$ is governed by the Boltzmann
equation\cite{Boltzmann}
\begin{equation} \label{BoltzmannEQ}
  \pd{f}{t}+\bxi\cdot \nabla_{\bx} f = Q(f,f),
\end{equation}
where the right hand side $Q(f,f)$ is the collision term, which models
the interaction among particles at the position $\bx$ and time $t$.  
The collision term is assumed to have only $1$, $\bxi$ and
$|\bxi|^2$ as locally conserved quantities, saying
\begin{equation}\label{eq:conservedquantities}
    \int_{\bbR^D}Q(f,f)(1,\bxi,|\bxi|^2)^T\dd\bxi=\boldsymbol{0},
\end{equation}
and 
\begin{equation}\label{eq:conservedquantities2}
    \text{if } \int_{\bbR^D}Q(f,f)\psi(\bxi)\dd\bxi=0, 
    \text{ for all } f, \text{ then }
    \psi(\bxi)=a+\boldsymbol{b}^T\bxi+c|\bxi|^2.
\end{equation}
The collision term is also assumed that
\begin{equation}\label{eq:equilibriummanifold}
    Q(f,f)=0\Rightarrow f=f_{eq},
\end{equation}
where $f_{eq}$ is the local equilibrium 
\begin{equation} \label{Maxwellian}
    f_{eq}(t,\bx,\bxi)=\frac{\rho(t,\bx)}{[2\pi\theta(t,\bx)]^{D/2}}
    \exp \left( -\frac{|\bxi-\bu(t,\bx)|^2} {2\theta(t,\bx)} \right).
\end{equation}

The binary collision term\cite{Cowling,Grad} is commonly used to model
the dilute gas, and has a quadratic form
\begin{equation} \label{binarycollsion}
    Q(f,f)=\int_{\bbR^D}\int_{S_+^{D-1}}
    (f'f'_1-ff_1)B(|\bxi-\bxi_1|,\sigma)\dd\bn\dd\bxi_1,
\end{equation}
where $S_+^{D-1}$ is the upper half sphere, $B(|\bxi-\bxi_1|,\sigma)$
is the collision kernel, and $\sigma$ is a function of $\bn$, $\bxi$
and $\bxi_1$, depending on the type of particles. 
In \eqref{binarycollsion},
\[
f=f(t,\bx,\bxi),~~ f_1=f(t,\bx,\bxi_1),~~ f'=f(t,\bx,\bxi'),~~
f'_1=f(t,\bx,\bxi'_1),
\]
where $\bxi$ and $\bxi_1$ are the velocities of two particles before
collision, $\bxi'$ and $\bxi'_1$ are their velocities after collision,
and $\bn$ is the direction between their centers of mass.
The specific expressions of $B(|\bxi-\bxi_1|,\sigma)$ and $\sigma$ are
not concerned in this paper.

As simplifications of the binary collision, researchers proposed some
alternative collision models to approximate the binary collision
model, such as BGK model\cite{BGK}, Shakhov model\cite{Shakhov} and
ES-BGK model\cite{Holway}. We list these models below for later usage:
\begin{itemize}
    \item Bhatnagar-Gross-Krook(BGK) model \cite{BGK}:
        \begin{equation} \label{BGKeq}
            Q(f,f)=\frac{1}{\tau}(f_{eq}-f),
        \end{equation}
        where $\tau$ is relaxation time.
    \item Shakhov model \cite{Shakhov}:
        \begin{equation} \label{ShakhovEQ}
            Q(f,f)=\frac 1 \tau (f_S-f),
        \end{equation}
        where
        \begin{equation*}
            f_S(t,\bx,\bxi)=f_{eq}(t,\bx,\bxi)\left( 1 + \dfrac{(1-\Pr)
            \bq^T (\bxi - \bu)}
            {(D+2)\rho\theta^2} \left(\dfrac{|\bxi-\bu|^2}{\theta}-(D+2)\right)
            \right),
        \end{equation*}
        where $\Pr$ is the Prandtl number, which is $2/3$ for
        monatomic gas.
\item ES-BGK model \cite{Holway}:
    \begin{equation} \label{ES-BGKeq}
        Q(f,f)=\frac{\Pr}{\tau}(f_G-f),
    \end{equation}
    where
    \begin{equation*}
        f_G=\dfrac{\rho}{\sqrt{
            \mathrm{det}(2\pi\bLambda)}}\exp\left(-\dfrac{1}{2}
        (\bxi-\bu)^T \bLambda^{-1}(\bxi-\bu) \right),
    \end{equation*}
    where $\bLambda = (\lambda_{ij}) \in \bbR^{D\times D}$ is a
    symmetric positive definite matrix with entries as $\lambda_{ij} =
    \dfrac{\theta \delta_{ij}}{\text{Pr}} + \left( 1 - \dfrac 1
    {\text{Pr}} \right) \dfrac{p_{ij}}{\rho}$, $i$, $j=1, \cdots, D$,
    and $\delta_{ij}$ is Kronecker delta symbol.
\end{itemize}
All of the four collision models satisfy the relationship
\eqref{eq:conservedquantities}, \eqref{eq:conservedquantities2} and
\eqref{eq:equilibriummanifold}.

In 1949, Grad proposed the well-known Grad's moment method\cite{Grad}
to derive moment equations from the Boltzmann equation. The key point
is to expand the distribution function around the local Maxwellian
into Hermite series as
\begin{equation}
	\label{fexpand}
	f(t,\bx,\bxi)=\sum_{\balpha\in\bbN^D} 
	f_{\balpha}(t,\bx) \Halphau(\bxi),
\end{equation}
where $\balpha$ is a $D$-dimensional multi-index, $\Halphau(\bxi)$ is
the basis function, defined by
\begin{equation} \label{Halphau}
    \Halphau(\bxi)=\Healphau(\bxi)\omegau(\bxi),
    \quad 
    \omegau(\bxi)=\frac{f_{eq}}{\rho}
    =\frac{1}{[2\pi\theta]^{D/2}}
    \exp\left( -\frac{|\bxi-\bu|^2}{2\theta} \right),
\end{equation}
where 
\begin{equation}
    \Healphau(\bxi)=
    \frac{1}{\omegau(\bxi)}\prod_{d=1}^D \pd{^{\alpha_d}}{\xi_d^{\alpha_d}}\omegau(\bxi)
    ,\quad \balpha\in\bbN^D.
\end{equation}
Due to the orthogonality of the basis function, we have
\cite{Qiao,FanDissertation}
\begin{equation}\label{eq:def_falpha}
    f_{\balpha}=\frac{\theta^{|\balpha|}}{\balpha!}\int_{\bbR^D}f\Healphau(\bxi)\dd\bxi,
\end{equation}
where $|\balpha|=\sum_{d=1}^D\alpha_d$, and $\balpha!=\prod_{d=1}^D\alpha_d!$. 
Particularly, we have for $i,j=1,\cdots,D$
\begin{equation}
    \begin{aligned}
        &f_0=\rho,\quad f_{e_i}=0, \quad \sum_{d=1}^Df_{2e_d}=0,\\
        &p_{ij}=p\delta_{ij}+(1+\delta_{ij})f_{e_i+e_j},\quad
        q_i=2f_{3e_i}+\sum_{d=1}^Df_{e_i+2e_d},
    \end{aligned}
\end{equation}
where $e_i$, $i=1,\cdots,D$ is unit multi-index with its $i$-th entry
to be $1$.

Substituting the expansion \eqref{fexpand} into the Boltzmann equation
\eqref{BoltzmannEQ}, and matching the coefficients of the basis
function $\Halphau(\bxi)$, we can obtain the governing equation of
$\bu$, $\theta$ and $f_{\alpha}$, $\alpha\in\bbN^3$. However, the
resulting system contains infinite number of equations. A cut-off and
moment closure are required. Choosing a positive integer
$3\leq M\in\bbN$, and discarding all the equations including
$\pd{f_{\alpha}}{t}$, $|\alpha|>M$, and setting $f_{\alpha}=0$,
$|\alpha|>M$ to closure the residual system, we can obtain $M$-th
order Grad's moment system as
\begin{equation}
	\begin{aligned}
		\pd{f_{\balpha}}{t}&+\sum_{d=1}^D\left( \theta \pd{f_{\balpha-\be_d}}{x_d}+u_d\pd{f_{\balpha}}{x_d}+
		(1-\delta_{|\balpha|,M})
		(\balpha_d+1)\pd{f_{\balpha+\be_d}}{x_d} \right) \\
		+\sum_{k=1}^Df_{\balpha-\be_k}\pd{u_k}t&+\sum_{k,d=1}^D\pd{u_k}{x_d}(\theta f_{\balpha-\be_k-\be_d} +u_d f_{\balpha-\be_k}+
		(\balpha_d+1)f_{\balpha-\be_k+\be_d }) \\
		+\frac 12 \sum_{k=1}^D f_{\balpha-2\be_k}\pd{\theta}t&+\sum_{k,d=1}^D\frac 12 \pd{\theta}{x_d}(\theta f_{\balpha-2\be_k-\be_d}+u_df_{\balpha-2\be_k}+
		(\balpha_d+1)f_{\balpha-2\be_k+\be_d}) \\
		&=S_{\balpha},\quad |\balpha|\leq M,
	\end{aligned}
	\label{Gradeq}
\end{equation}
where 
\begin{equation}\label{eq:def_Salpha}
    S_{\balpha}=
    \frac{\theta^{|\balpha|}}{\balpha!}\int_{\bbR^D}Q(f,f)\Healphau(\bxi)\dd\bxi.
\end{equation}
It is well-known that Grad's moment system lacks of global
hyperbolicity \cite{Muller} and it was found recently that it is not
hyperbolic even around the local Maxwellian \cite{Grad13toR13}. The
globally hyperbolic regularization proposed in \cite{Fan,Fan_new}
essentially fixes this drawback and yields the globally hyperbolic
moment equations (HME) as
\[
	\begin{aligned}
		\pd{f_{\balpha}}{t}&+\sum_{d=1}^D\left( 
		\theta \pd{f_{\balpha-\be_d}}{x_d}+u_d\pd{f_{\balpha}}{x_d}+
		(1-\delta_{|\balpha|,M})
		(\balpha_d+1)\pd{f_{\balpha+\be_d}}{x_d} \right) \\
		+\sum_{k=1}^Df_{\balpha-\be_k}\pd{u_k}t&+
		\sum_{k,d=1}^D\pd{u_k}{x_d}(\theta f_{\balpha-\be_k-\be_d} +u_d f_{\balpha-\be_k}+
		(1-\delta_{|\balpha|,M})
		(\balpha_d+1)f_{\balpha-\be_k+\be_d }) \\
		+\frac 12 \sum_{k=1}^D f_{\balpha-2\be_k}\pd{\theta}t&
		+\sum_{k,d=1}^D\frac 12 \pd{\theta}{x_d}(\theta f_{\balpha-2\be_k-\be_d}+u_df_{\balpha-2\be_k}+
		(1-\delta_{|\balpha|,M})
		(\balpha_d+1)f_{\balpha-2\be_k+\be_d}) \\
		&=S_{\balpha},\quad |\balpha|\leq M,
	\end{aligned}
	\label{HME}
\]
where $(\cdot)_{\alpha}$ is taken as zero if any component of $\alpha$
is negative. To simplify the notations, we introduce the ordering
relation on $\bbN^D$.
\begin{definition}[Graded reverse lexicographic]
    An ordering relaxation on $\bbN^D$ is called graded reverse
    lexicographic ordering $\prec$ if for any $\balpha,
    \bbeta\in\bbN^D$
    \[
        \begin{aligned}
            \balpha\prec\bbeta \Longleftrightarrow&
            |\balpha|\le|\bbeta| ~\text{or}~\\
            &|\balpha|=|\bbeta|, ~
            \text{and}~\exists i (1\leq i\leq D), ~\text{s.t.}
            ~\alpha_i>\beta_i,~\alpha_j=\beta_j (i < j \leq D).
        \end{aligned}
    \]
\end{definition}
With this ordering, we adopt the multi-indices as the subscripts of
vectors and matrices since now on, sorting the multi-indeices by the
graded reverse lexicographic ordering $\prec$. Let $N$ to be all the
multi-indices not greater than $Me_D$, which is the total number of
the equations in $M$-th order Grad's moment system. For a vector
$\bw\in\bbR^N$, $w_\alpha$ stands for the entry with $\alpha$ as
subscript, and for a matrix $\bD \in \bbR^{N \times N}$,
$D_{\alpha,\beta}$ stands for the entry with row index $\alpha$ and
column index $\beta$.

Following the notations in \cite{Fan_new}, define $\bw\in\bbR^N$ and
\begin{align} \label{bwdefine}
	w_{\balpha}= \left\{ 
	\begin{aligned}
		&\rho,    &&\balpha=\bzero, \\
		&u_i,   && \balpha=\be_i,~i=1,\cdots,D, \\
		&\frac{p_{ij}}{1+\delta_{ij}},  &&
                \balpha=\be_i+\be_j,~i,j=1,\cdots,D, \\
		&f_{\balpha},   &&3\leq|\balpha|\leq M.
	\end{aligned}
	\right.
\end{align}
The HME \eqref{HME} can be written into quasi-linear form\cite{framework}:
\begin{equation} \label{HMEquasilinear}
	\bD \pd{\bw}{t}+\sum_{d=1}^D \bM_d\bD\pd{\bw}{x_d}=\bS,
\end{equation}
where the coefficient matrices $\bD$, $\bM_d$ are defined as\cite{FanDissertation}
\begin{align}
	\begin{split} \label{bD}
		\bD&=\bI+\sum_{|\balpha|\leq M}\Big(
		\sum_{d=1}^Df_{\balpha-\be_d}\bE_{\balpha,\be_d}
		-\frac{\theta}{2\rho}\sum_{d=1}^Df_{\balpha-2\be_d}\bE_{\balpha,\bzero}\\
		&\qquad\qquad+H(|\balpha|-3)\frac
        1{D\rho}\left(\sum_{d=1}^Df_{\balpha-2\be_d}\right)\sum_{k=1}^D\bE_{\balpha,2\be_k}\Big)
		-\sum_{d=1}^D \bE_{\be_d,\be_d},
	\end{split}\\
	\bM_d &= \sum_{|\balpha|\leq M}(\theta \bE_{\balpha,\balpha-\be_d}
	+u_d\bE_{\balpha,\balpha}+(1-\delta_{|\balpha|,M})(\alpha_d+1)\bE_{\balpha,\balpha+\be_d}),
	\label{bM}
\end{align}
where $\bI$ is the identity matrix and $\bE_{\balpha,\bbeta}$ is zero
matrix if any component of $\balpha,\bbeta$ is negative or
$|\bbeta|>M$, and is the matrix with all its entries to be $0$, except
for the only entry with row index $\balpha$ and column index $\beta$
to be $1$. The Heaviside step function $H(x)$ is defined as
\begin{equation*}
    H(x)= \left \{
        \begin{aligned}
            0, \quad x<0, \\
            1, \quad x\geq 0.
        \end{aligned} \right.
\end{equation*}
As pointed out in \cite{FanDissertation}, $\bD$ is a lower triangular
matrix with all diagonal entries nonzero thus invertible, and its
inverse is
\begin{align} \label{bD-1}
	\begin{aligned}
		\bD^{-1}=\bI-\sum_{|\balpha|\leq M}&\Big(\sum_{d=1}^D\frac{f_{\balpha-\be_d}}{\rho}\bE_{\balpha,\be_d}
		+H(|\balpha|-3)\frac 1{D\rho}\left(\sum_{d=1}^Df_{\balpha-2\be_d}\right)\sum_{k=1}^D\bE_{\balpha,2\be_k}\Big)\\
		&+\sum_{d=1}^D \frac 1\rho \bE_{\be_d,\be_d} +\frac{\theta}2\sum_{d=1}^D \bE_{2\be_d,\bzero}.
	\end{aligned}
\end{align}

Noticing \eqref{eq:conservedquantities} and \eqref{bD-1}, we obtain
$\left( \bD^{-1}-\bI \right)\bS=\boldsymbol{0}$, thus
\begin{equation} \label{bmS}
\bD^{-1}\bS=\bS.
\end{equation}
Hence, the HME \eqref{HMEquasilinear} can be reformulated as
\begin{equation}
	\label{HMEmatrix}
	\pd{\bw}{t}+\sum_{d=1}^D \bA_d\pd{\bw}{x_d}=\bS,
\end{equation}
where $\bA_d=\bD^{-1}\bM_d\bD$. 


\section{Linear Stability of HME}
\label{sec:stability}
Now we begin to investigate the linear stability of the HME at the
thermodynamic equilibrium. First we linearize the HME into linear
balance laws at a local Maxwellian given by $\rho_0$,
$\bu_0=\boldsymbol{0}$, and $\theta_0$. Let us introduce the
dimensionless variables $\bar{\rho}$, $\bar{\theta}$, $\bar{\bu}$,
$\bar{p}$, $\bar{p}_{ij}$ and $\bar{f}_{\alpha}$ as
\begin{equation}\label{eq:dimensionless}
  \begin{aligned}
    &\rho = \rho_0 (1 + \bar{\rho}),\quad u_i = \sqrt{\theta_0} \bar{u}_i,
    \quad \theta = \theta_0 (1 + \bar{\theta}),
    \quad p = p_0 (1 + \bar{p}),\\
    &p_{ij}=p_0(\delta_{ij}+\bar{p}_{ij}),
    \quad f_{\alpha}=\rho_0\theta_0^{\frac{|\alpha|}{2}} \cdot \bar{f}_{\alpha},
    \quad \bx = L\cdot
    \bar{\bx},\quad t = \frac{L}{\sqrt{\theta_0}}\bar{t},
  \end{aligned}
\end{equation}
where $L$ is a characteristic length, $\bar{\bx}$ and $\bar{t}$ are
the dimensionless coordinates and time, respectively.
Let 
\begin{equation}\label{eq:dimensionOfbw}
    \bLambda_0=\sum_{|\alpha|\leq
    M,|\alpha|\neq 1}\rho_0\theta_0^{|\alpha|/2}\bE_{\alpha,\alpha}
    +\sqrt{\theta_0}\sum_{d=1}^D\bE_{e_d,e_d},
\end{equation}
\begin{equation}\label{eq:bw0}
    \bw_0=\left\{ \begin{array}{l}
		1,    \\
		0,    \\
        \frac{\delta_{ij}}{2},\\
		0, 
    \end{array} \right.
    \qquad
    \bar{\bw}=\left\{ \begin{array}{ll}
        \bar{\rho},    &\balpha=\bzero, \\
        \bar{u}_i,   & \balpha=\be_i,~i=1,\cdots,D, \\
        \frac{\bar{p}_{ij}}{1+\delta_{ij}},  & \balpha=\be_i+\be_j,~i,j=1,\cdots,D, \\
        \bar{f}_{\alpha},   &3\leq|\balpha|\leq M,
    \end{array} \right.
\end{equation}
then $\bw=\bLambda_0(\bw_0+\bar{\bw})$.
All the dimensionless variables $\bar{\rho}$, $\bar{\theta}$,
$\bar{\bu}$, $\bar{p}$, $\bar{p}_{ij}$ and $\bar{f}_{\alpha}$ are
small quantities.
Substituting \eqref{eq:dimensionless}, \eqref{eq:dimensionOfbw} and
\eqref{eq:bw0} into the globally hyperbolic moment system
\eqref{HMEquasilinear}, and discarding all the high-order 
quantities, we obtain the linearized HME as
\begin{equation}
    \bD(\bLambda_0\bw_0)\bLambda_0\pd{\bar{\bw}}{\bar{t}}\frac{\sqrt{\theta_0}}{L} +
    \sum_{d=1}^D\bM(\bLambda_0\bw_0)\bD(\bLambda_0\bw_0)\bLambda_0\pd{\bar{\bw}}{\bar{x}_d}\frac{1}{L}
    =\bQ(\bLambda_0\bw_0)\bLambda_0\bar{\bw},
\end{equation}
where $\bS(\bLambda_0\bw_0)=0$ is applied and $\bQ=\pd{\bS}{\bw}$. 
Let $\bLambda_1=\sum_{|\alpha|\leq
M}\rho_0\theta_0^{|\alpha|/2}\bE_{\alpha,\alpha}$, then some
simplifications yield
\begin{equation}
    \bar{\bD}\pd{\bar{\bw}}{\bar{t}}+
    \sum_{d=1}^D\bar{\bM}_d\bar{\bD}\pd{\bar{\bw}}{\bar{x}_d}=
    \bar{\bQ}\bar{\bw},
\end{equation}
where 
\begin{equation}\label{eq:barDMQ}
    \begin{aligned}
        \bar{\bD}&=\bLambda_1^{-1}\bD(\bLambda_0\bw_0)\bLambda_0
        =\bI-\frac{1}{2}\sum_{d=1}^D\bE_{2e_d,0},\\
        \bar{\bM}_d&=\frac{1}{\sqrt{\theta_0}}\bLambda_1^{-1}\bM_d(\bLambda_0\bw_0)\bLambda_1
        =\sum_{|\alpha|\leq M}\left(
        \bE_{\alpha,\alpha-e_d}+(1-\delta_{|\alpha|,M})(\alpha_d+1)\bE_{\alpha,\alpha+e_d}
        \right),\\
        \bar{\bQ}&=\frac{L}{\sqrt{\theta_0}}\bLambda_1^{-1}\bQ(\bLambda_0\bw_0)\bLambda_0
        = \frac{L}{\sqrt{\theta_0}}\bLambda_1^{-1}\bQ(\bLambda_0\bw_0)\bLambda_1,
    \end{aligned}
\end{equation}
where \eqref{eq:conservedquantities} is used in the last equation.
The equation \eqref{bmS} indicates
$\bar{\bD}^{-1}\bar{\bQ}=\bar{\bQ}$,
so we have
\begin{equation} \label{linearHME}
    \pd{\bar{\bw}}{\bar{t}}+\sum_{d=1}^D\bar{\bA}_d\pd{\bar{\bw}}{\bar{x}_d}
    =\bar{\bQ}\bar{\bw}, \quad\text{ with }
    \bar{\bA}_d=\bar{\bD}^{-1}\bar{\bM}_d\bar{\bD}.
\end{equation}

To investigate the linear stability of the HME \eqref{HMEmatrix} is to
study the stability of the linearized HME \eqref{linearHME}. We first
directly propose two lemmas on the properties of the linearized HME
\eqref{linearHME} and leave the proof to the following part of this
section.
\begin{lemma}\label{Qsymmetric}
    There exists a constant invertible matrix $\bT\in\bbR^{N\times
    N}$ subject to $\bT^{-1}\bar{\bM}_d\bT$, $d=1,\cdots,D$ is symmetric,
    and $\bT^{-1}\bar{\bQ}\bT$ is symmetric negative semi-definite.
\end{lemma}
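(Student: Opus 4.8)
The plan is to exhibit a single constant similarity transformation $\bT$ that simultaneously symmetrizes all the convection matrices $\bar{\bM}_d$ and renders the linearized collision Jacobian $\bar{\bQ}$ symmetric negative semi-definite. The natural candidate comes from the Hermite structure: the basis functions $\Healphau$ are orthogonal with respect to the weight $\omegau$, and the weighted $L^2$ inner product $\langle g,h\rangle = \int_{\bbR^D} g h / \omegau \dd\bxi$ makes multiplication by $\xi_d$ (the operator underlying $\bar{\bM}_d$, via the three-term recurrence $\xi_d \Healphau = \theta \text{He}_{\balpha+\be_d} + \cdots$ after the dimensionless rescaling) self-adjoint. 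Concretely I would take $\bT = \bLambda_2^{1/2}$ (composed with the fixed rescaling already present in \eqref{eq:barDMQ}), where $\bLambda_2 = \diag\!\big(\balpha!\,\theta_0^{-|\balpha|}\big)$ or the analogous diagonal matrix that turns the Hermite recurrence coefficients symmetric; one checks directly from \eqref{eq:barDMQ} that $\bLambda_2^{1/2}\bar{\bM}_d\bLambda_2^{-1/2}$ is symmetric because the off-diagonal entries $1$ and $(\alpha_d+1)$ in rows $\balpha$ and $\balpha+\be_d$ become geometric-mean-balanced. So the first step is: choose the diagonal $\bLambda_2$, verify $\bT^{-1}\bar{\bM}_d\bT$ symmetric by an explicit entrywise comparison using \eqref{eq:barDMQ}.

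The second step is to show $\bT^{-1}\bar{\bQ}\bT$ is symmetric and negative semi-definite, and this is where the real content lies. Here I would go back to the definition of $S_{\balpha}$ in \eqref{eq:def_Salpha} and differentiate the collision operator at the local Maxwellian. The key point, emphasized in the introduction, is that expanding $f$ at $f_{eq}$ (where $Q$ vanishes) linearizes $Q(f,f)$ into the classical linearized collision operator $\mathcal{L}$, whose matrix representation in the Hermite basis is, by the standard H-theorem / Boltzmann-entropy argument, symmetric and nonpositive with respect to precisely the weighted inner product above — this is true for the binary collision term (using the collision symmetry $(\bxi,\bxi_1)\leftrightarrow(\bxi',\bxi'_1)$ to symmetrize the bilinear form) and, by direct inspection, for the BGK, Shakhov and ES-BGK relaxation models, each of which is $\tau^{-1}$ times a projection-type operator $f \mapsto (\Pi f - f)$ with $\Pi$ an orthogonal (self-adjoint, idempotent, hence nonnegative with norm $\le 1$) projection onto a finite-dimensional subspace containing the equilibrium. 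Thus in each of the four cases $\bar{\bQ}$ is conjugate, via $\bT$, to a symmetric matrix with spectrum in $(-\infty,0]$. Since the same $\bT$ was used for the $\bar{\bM}_d$, we are done. I would keep the details of the four collision models deferred to the subsections of Section~\ref{sec:stability} and here only record that the weighted-inner-product symmetry of $\mathcal{L}$ is the common mechanism.

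The main obstacle is reconciling two things that want different weights: the convection matrices $\bar{\bM}_d$ are symmetrized by the Hermite-recurrence weight $\balpha!/\theta_0^{|\balpha|}$, while the collision operator's natural inner product is the $L^2(1/\omegau)$ weight, and after the rescaling $f_{\balpha} = (\theta^{|\balpha|}/\balpha!)\int f \Healphau$ these are the \emph{same} weight up to the constant factors already absorbed in $\bLambda_0,\bLambda_1,\bLambda_2$ — verifying that coincidence carefully (i.e.\ that one $\bT$ works for both) is the delicate bookkeeping step. A secondary subtlety is the cutoff/closure: $\bar{\bM}_d$ is the \emph{truncated} recurrence matrix (the $(1-\delta_{|\balpha|,M})$ factors kill the top row), so one must check the truncation does not destroy symmetry — it does not, because removing the entries $\bE_{\balpha,\balpha+\be_d}$ for $|\balpha|=M$ simultaneously removes their transposes $\bE_{\balpha+\be_d,\balpha}$, which lie in rows with $|\balpha|+1>M$ and hence outside the index set $N$; so the truncated matrix is still symmetric after conjugation by the diagonal $\bT$. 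Likewise one uses \eqref{eq:conservedquantities} to see that the collision terms never couple into or out of the conserved components in a way that spoils symmetry — this is exactly the content of $\bar{\bD}^{-1}\bar{\bQ}=\bar{\bQ}$ already derived. Once these two compatibility checks are in place, the lemma follows by combining them.
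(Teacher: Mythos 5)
Your proposal is correct and follows essentially the same route as the paper: the same diagonal normalization $\bT=\diag(1/\sqrt{\balpha!})$ (the paper explicitly notes it amounts to normalizing the Hermite basis), the same entrywise check that the truncated recurrence stays symmetric, and for the binary term the same symmetrization of the linearized collision bilinear form via the exchanges $(\bxi,\bxi_1)\leftrightarrow(\bxi',\bxi'_1)$ and $(\bxi,\bxi')\leftrightarrow(\bxi_1,\bxi_1')$, which the paper carries out to get $\partial\bar{S}_{\balpha}/\partial w_{\bbeta}=-\tfrac14\int\rho\,\omegau(\bxi)\omegau(\bxi_1)L(\balpha)L(\bbeta)B\dd\bm{\tau}$. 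One small correction: the linearized Shakhov and ES-BGK operators are not of the form $\Pi-\bI$ with $\Pi$ an idempotent orthogonal projection (the $(1-\Pr)$ and $(1-1/\Pr)$ factors spoil idempotency); they are self-adjoint contractions with $0\leq\Pi\leq\bI$, which still yields negative semi-definiteness and matches the explicit matrices the paper computes case by case.
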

\begin{lemma}\label{lem:DQD}
    Matrices $\bar{\bD}$ and $\bar{\bQ}$ satisfy
    \begin{equation}
        \bar{\bD}^{-1}\bar{\bQ}\bar{\bD}=\bar{\bQ},
    \end{equation}
    for all the four collision models, including BGK model, Shakhov
    model, ES-BGK model and binary collision model.
\end{lemma}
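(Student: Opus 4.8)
The plan is to reduce the conjugation identity to a linear condition on the columns of $\bar{\bQ}$, and then to recognize that condition as the statement that a particular direction in $\bw$-space is tangent to the equilibrium manifold $\mE$. Since $\bar{\bD}$ is a \emph{constant} matrix with $\bar{\bD}-\bI=-\tfrac12\sum_{d=1}^D\bE_{2\be_d,\bzero}$ and $\bar{\bD}^{-1}-\bI=+\tfrac12\sum_{d=1}^D\bE_{2\be_d,\bzero}$, and since $\bar{\bD}^{-1}\bar{\bQ}=\bar{\bQ}$ is already available (it was deduced from \eqref{bmS}), one has $\bar{\bD}^{-1}\bar{\bQ}\bar{\bD}=(\bar{\bD}^{-1}\bar{\bQ})\bar{\bD}=\bar{\bQ}\bar{\bD}$; so the lemma is equivalent to $\bar{\bQ}\bar{\bD}=\bar{\bQ}$, i.e. to $\bar{\bQ}\bigl(\sum_{d=1}^D\bE_{2\be_d,\bzero}\bigr)=\bm{0}$. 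Spelling this out, $\bar{\bQ}\bE_{2\be_d,\bzero}$ has only its $\bzero$-th column nonzero and that column is the $2\be_d$-th column of $\bar{\bQ}$, so the claim amounts to
\[
    \sum_{d=1}^D\bar{\bQ}_{\balpha,2\be_d}=0\qquad\text{for every multi-index }\balpha.
\]

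Next I would transfer this back to the unscaled Jacobian $\bQ=\partial\bS/\partial\bw$. Because $\bLambda_1$ in \eqref{eq:barDMQ} is diagonal with $(\bLambda_1)_{\balpha,\balpha}=\rho_0\theta_0^{|\balpha|/2}$ and $|2\be_d|=2$ is independent of $d$, we get $\bar{\bQ}_{\balpha,2\be_d}=\tfrac{L}{\sqrt{\theta_0}}\theta_0^{(2-|\balpha|)/2}\bQ_{\balpha,2\be_d}(\bLambda_0\bw_0)$ with a prefactor not depending on $d$. Hence the displayed identity is equivalent to $\sum_{d=1}^D\partial S_{\balpha}/\partial w_{2\be_d}=0$ at $\bw=\bLambda_0\bw_0$; equivalently, with $\bv\in\bbR^N$ the vector whose entries are $v_{2\be_d}=1$ ($d=1,\dots,D$) and $0$ otherwise, we must show $\bQ\bv=\bm{0}$.

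The heart of the matter is that $\bv$ is tangent to $\mE$ at $\bLambda_0\bw_0$. Using the correspondence \eqref{bwdefine}, move from $\bLambda_0\bw_0$ along $\bv$, i.e. set $w_{2\be_d}=\rho_0\theta_0/2+s$ for all $d$ and freeze every other component at its equilibrium value. Then $\rho\equiv\rho_0$, $\bu\equiv\bm{0}$, each diagonal pressure component becomes $p_{dd}=\rho_0\theta_0+2s$, while the off-diagonal $p_{ij}$ and all $f_{\balpha}$ with $3\le|\balpha|\le M$ stay zero; consequently $p=\tfrac1D\sum_d p_{dd}=\rho_0\theta_0+2s$, $\theta=p/\rho=\theta_0+2s/\rho_0$, the stress $\Sigma=P-p\bI$ stays zero, and every non-conserved moment vanishes. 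So this curve is exactly the family of local Maxwellians with parameters $(\rho_0,\bm{0},\theta_0+2s/\rho_0)$ and lies entirely in $\mE$. For each of the four collision models $Q(f_{eq},f_{eq})=0$ by \eqref{eq:equilibriummanifold}, hence $S_{\balpha}\equiv0$ on $\mE$ by \eqref{eq:def_Salpha}; differentiating the identity $\bS(\bw(s))\equiv\bm{0}$ in $s$ at $s=0$ gives $\bQ\bv=\bm{0}$, which yields $\bar{\bQ}\bar{\bD}=\bar{\bQ}$ and therefore $\bar{\bD}^{-1}\bar{\bQ}\bar{\bD}=\bar{\bQ}$, uniformly in the collision model.

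I expect the only real obstacle to be the bookkeeping in the previous step: one has to track the dictionary between entries of $\bw$ and the physical quantities $(\rho,\bu,\theta,\Sigma,\bq,f_{\balpha})$ and use $\sum_d p_{dd}=D\rho\theta$ to see that shifting all diagonal pressure components equally is nothing but a temperature perturbation. This is exactly why the model-specific form of the collision operator is irrelevant — only the fact that the local Maxwellian is its equilibrium is used. (The companion relation $\bar{\bD}^{-1}\bar{\bQ}=\bar{\bQ}$ invoked above has the same flavour: it reflects the conservation of mass $S_{\bzero}\equiv0$ from \eqref{eq:conservedquantities}.)
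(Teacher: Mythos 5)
Your proof is correct, but it takes a genuinely different route from the paper's. The paper first derives $\bar{\bD}^T\bar{\bQ}=\bar{\bQ}$ (equivalently $\bK^T\bar{\bQ}=0$ with $\bK=\tfrac12\sum_d\bE_{2e_d,0}$) from the conservation properties, and then invokes the symmetry of $\bT^{-1}\bar{\bQ}\bT$ --- i.e.\ Lemma \ref{Qsymmetric}, whose verification occupies most of Section \ref{sec:stability} and must be done model by model, with the binary collision case being the delicate one --- to transpose that left-kernel statement into the right-kernel statement $\bar{\bQ}\bK=0$, which is what the lemma needs. You instead prove $\bar{\bQ}\bK=0$ directly: you identify the column combination $\sum_d\bE_{\cdot,2e_d}$ with the tangent vector to the one-parameter family of local Maxwellians obtained by shifting all diagonal pressures equally (a pure temperature perturbation, since $\sum_d p_{dd}=D\rho\theta$ and the stress and higher moments stay zero), and then differentiate the identity $\bS\equiv 0$ along that curve. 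This uses only $Q(f_{eq},f_{eq})=0$ and none of the structure of $\bQ$ beyond the chain rule, so your argument is more elementary, is uniform over the four collision models, and makes Lemma \ref{lem:DQD} logically independent of Lemma \ref{Qsymmetric} (in the paper the two are entangled, since the proof of Lemma \ref{lem:DQD} is conditional on Lemma \ref{Qsymmetric} being valid). The price is the bookkeeping you flag yourself --- the dictionary between $w_{2e_d}$, $p_{dd}$, $\theta$ and $f_{2e_d}$ --- but you carry it out correctly; the reduction steps ($\bar{\bD}^{-1}\bar{\bQ}=\bar{\bQ}$ from \eqref{bmS}, $\bK^2=0$, and the $d$-independence of the $\bLambda_1$ scaling on second-order indices) are all sound.
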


With the lemmas above, our main result of this section is the
following theorem.
\begin{theorem} \label{thm:linearstability}
    The HME \eqref{HMEmatrix} is linearly stable both in space and in
    time at the local Maxwellian, i.e. the linearized HME
    \eqref{linearHME} is stable both in space and in time.
\end{theorem}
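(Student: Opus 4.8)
The plan is to reduce the stability of \eqref{linearHME} to an eigenvalue statement about a symmetric pencil, and then invoke Lemmas \ref{eigenvaluenonnegative} and \ref{eigenvaluek} directly. First I would use Lemma \ref{lem:DQD}, which says $\bar{\bD}^{-1}\bar{\bQ}\bar{\bD}=\bar{\bQ}$, equivalently $\bar{\bQ}\bar{\bD}=\bar{\bD}\bar{\bQ}$; combined with $\bar{\bA}_d=\bar{\bD}^{-1}\bar{\bM}_d\bar{\bD}$ this lets me rewrite the dispersion relation \eqref{singularity}. Indeed, for a spatial disturbance the relevant determinant is $\det(\Omega\bI-\sum_d k_d\bar{\bA}_d+\imag\bar{\bQ})$, and since $\bar{\bA}_d$ and $\bar{\bQ}$ are both obtained by conjugating $\bar{\bM}_d$ and (a matrix commuting with $\bar{\bD}$, namely) $\bar{\bQ}$ by $\bar{\bD}$, I can factor $\bar{\bD}^{-1}(\cdots)\bar{\bD}$ out of the determinant:
\[
\det\Bigl(\Omega\bI-\sum_{d=1}^D k_d\bar{\bA}_d+\imag\bar{\bQ}\Bigr)
=\det\Bigl(\Omega\bI-\sum_{d=1}^D k_d\bar{\bM}_d+\imag\bar{\bQ}\Bigr).
\]
So the dispersion relations for \eqref{linearHME} and for the simpler system with coefficient matrices $\bar{\bM}_d,\bar{\bQ}$ coincide, and it suffices to analyze the latter.

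Next I would apply Lemma \ref{Qsymmetric}: there is a constant invertible $\bT$ with $\bT^{-1}\bar{\bM}_d\bT$ symmetric for every $d$ and $\bT^{-1}\bar{\bQ}\bT$ symmetric negative semi-definite. Conjugating by $\bT$ does not change determinants either, so the dispersion relation becomes
\[
\det\Bigl(\Omega\bI-\sum_{d=1}^D k_d\bA+\imag\bB\Bigr)=0,\qquad
\bA:=\sum_{d=1}^D k_d\bT^{-1}\bar{\bM}_d\bT,\quad \bB:=\bT^{-1}\bar{\bQ}\bT,
\]
where now $\bA$ is a real symmetric matrix (a real linear combination of symmetric matrices, since $\bk\in\bbR^D$) and $\bB$ is real symmetric negative semi-definite.

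For stability in time, $\bk\in\bbR^D$ is fixed and $\Omega$ is the unknown: the condition $\det(\Omega\bI-\bA+\imag\bB)=0$ says exactly that $\Omega$ is an eigenvalue of $\bA-\imag\bB$, so Lemma \ref{eigenvaluenonnegative} gives $\Omega_i=\operatorname{Im}\Omega\ge0$. For stability in space (one-dimensional, so $D=1$), $\Omega\in\bbR^+$ is fixed and $k=k_r+\imag k_i$ is the unknown; with $\bA=\bT^{-1}\bar{\bM}_1\bT$ symmetric and $\bB$ symmetric negative semi-definite, the equation $\det(k\bA-\imag\bB-\Omega\bI)=0$ is precisely the hypothesis of Lemma \ref{eigenvaluek} with $\lambda=\Omega>0$, yielding $k_rk_i\le0$. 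By the Definition of stability these two facts are exactly what is required, completing the proof.

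The only subtlety I anticipate is bookkeeping around the determinant factorizations: I must check that $\bar{\bD}$ and $\bT$ are invertible (the former is stated in the text, the latter is part of Lemma \ref{Qsymmetric}) and that they are constant, so that conjugation commutes with taking $\det$ and with forming the pencil in $\Omega$ (resp.\ $k$). A second point worth stating explicitly is that $\bar{\bA}_d+\imag\bar{\bQ}$ need not itself be a nice matrix, which is why the reduction to $\bar{\bM}_d,\bar{\bQ}$ via Lemma \ref{lem:DQD} — rather than a direct symmetrization of $\bar{\bA}_d$ — is the right move; after that, everything is a one-line appeal to the two linear-algebra lemmas. No genuinely hard estimate is involved; the real content has already been isolated into Lemmas \ref{Qsymmetric} and \ref{lem:DQD}.
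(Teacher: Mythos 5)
Your proposal is correct and takes essentially the same approach as the paper: both arguments reduce the dispersion relation to the symmetric pencil $\sum_{d=1}^D k_d\,\bT^{-1}\bar{\bM}_d\bT-\imag\,\bT^{-1}\bar{\bQ}\bT$ using Lemmas \ref{Qsymmetric} and \ref{lem:DQD} (the paper does this in a single step by conjugating $\sum_d k_d\bar{\bA}_d-\imag\bar{\bQ}$ with $\bT^{-1}\bar{\bD}$, while you perform the same reduction in two determinant-preserving steps), and then conclude with Lemmas \ref{eigenvaluenonnegative} and \ref{eigenvaluek}. The only blemish is cosmetic: in your displayed dispersion relation the factor $\sum_{d=1}^D k_d$ appears both outside $\bA$ and inside your definition of $\bA$.
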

\begin{proof}
  We first prove the linear stability in time. Let $\bT$ be the
  constant invertible matrix $\bT$ in Lemma \ref{Qsymmetric}, then
  $\displaystyle\sum_{d=1}^Dk_d\bT^{-1}\bar{\bM}_d\bT$ is symmetric,
  and $\bT^{-1}\bar{\bQ}\bT$ is symmetric negative semi-definite. Due
  to Lemma \ref{eigenvaluenonnegative}, each eigenvalue of the matrix
  $\displaystyle\sum_{d=1}^Dk_d\bT^{-1}\bar{\bM}_d\bT
  -\imag\bT^{-1}\bar{\bQ}\bT$
  has a non-negative imaginary part, and thus each eigenvalue of
  \[
        \sum_{d=1}^Dk_d\bar{\bA}_d-\imag\bar{\bQ}= 
        \left( \bT^{-1}\bar{\bD} \right)^{-1}
        \left(\sum_{d=1}^Dk_d\bT^{-1}\bar{\bM}_d\bT -\imag\bT^{-1}\bar{\bQ}\bT\right) 
        \left( \bT^{-1}\bar{\bD} \right) 
  \]
  has a non-negative imaginary part, i.e. $\Omega_i\geq0$. Here Lemma
  \ref{lem:DQD} is used.

  Analogously, the linear stability in space can be proved directly with
  Lemma \ref{eigenvaluek}, Lemma \ref{Qsymmetric} and Lemma
  \ref{lem:DQD}.
\end{proof}

To finish the proof of theorem \ref{thm:linearstability}, we need to
check the validity of Lemma \ref{Qsymmetric} and Lemma \ref{lem:DQD}.
Below we construct a constant invertible $\bT$ subject to
$\bT^{-1}\bar{\bM}_d\bT$, $d=1,\cdots,D$ is symmetric at first, and then
we prove that $\bT^{-1}\bar{\bQ}\bT$ is symmetric negative
semi-definite for all the four collisions and Lemma \ref{lem:DQD}.

It is easy to see that the construction of the matrix $\bT$ is not
unique. Actually, if the matrix $\bT$ satisfies the constraints in
Lemma \ref{Qsymmetric}, then for any orthogonal matrix $\bT_1$,
$\bT_1\bT$ also satisfies the constraints in Lemma \ref{Qsymmetric}.
Here, we provide a direct construction. Precisely, if we define 
\begin{equation}
    \bT=\sum_{|\alpha|\leq
    M}\frac{1}{\sqrt{\alpha!}}\bE_{\alpha,\alpha},
\end{equation}
then 
\[
    \bT^{-1}\bar{\bM}_d\bT=\sum_{|\alpha|\leq M}
    \left( \sqrt{\alpha_d}\bE_{\alpha,\alpha-e_d}
    +(1-\delta_{|\alpha|,M})\sqrt{\alpha_d+1}\bE_{\alpha,\alpha+e_d}\right),
    \quad d=1,\cdots,D
\]
is symmetric.

In Grad's expansion \eqref{fexpand}, the basis function
$\Halphau(\bxi)$ is orthogonal but not normalized. The construction of
$\bT$ here is equivalent to a normalization of the basis functions.

Lemma \ref{lem:DQD} can be directly proved if Lemma \ref{Qsymmetric}
is valid.
\begin{proof}[Proof of Lemma \ref{lem:DQD}]
  Similarly as the derivative of \eqref{bmS}, it is easy to check
  $\bar{\bD}^T\bar{\bQ}=\bar{\bQ}$.  Let
  $\bK=\frac{1}{2}\sum_{d=1}^D\bE_{2e_d,0}$, then $\bar{\bD}=\bI-\bK$
  and $\bar{\bD}^{-1}=\bI+\bK$, and thus $\bK^T\bar{\bQ}=0$.

  It is easy again to check $\bT\bK=\frac{1}{\sqrt{2}}\bK$ and
  $\bK^T\bT^{-1}=\sqrt{2}\bK^T$, 
  thus $\bK^T\bT^{-1}\bar{\bQ}\bT=0$.  If
  Lemma \ref{Qsymmetric} is valid, then $\bT^{-1}\bar{\bQ}\bT$ is
  symmetric, and thus
  $0=\bT^{-1}\bar{\bQ}\bT\bK
  =\frac{1}{\sqrt{2}}\bT^{-1}\bar{\bQ}\bK$. Since $\bT^{-1}$
  is invertible, $\bar{\bQ}\bK=0$, which indicates
  $\bar{\bQ}\bar{\bD}^{-1}=\bar{\bQ}$. This completes the proof.
\end{proof}

Now let us prove Lemma \ref{Qsymmetric}. This requires us to verify
that $\bT^{-1}\bar{\bQ}\bT$ is symmetric negative semi-definite. Due
to the definition of $\bar{\bQ}$ \eqref{eq:barDMQ}, we need only to
show that
\begin{equation}\label{eq:Qconstrain}
    \frac{L}{\sqrt{\theta_0}}\bT^{-1}\bLambda_1^{-1}\bQ(\bLambda_0\bw_0)\bLambda_1\bT
    \text{ is symmetric negative semi-definite}.
\end{equation}
We check \eqref{eq:Qconstrain} case by case for the four
collision models we are considering:
\begin{itemize}
\item BGK model: Direct calculation of \eqref{eq:def_Salpha} yields
  $S_{\alpha}^{BGK}=H(|\alpha|-2)f_{\alpha}$, thus
  \[
    \bQ^{BGK}(\bLambda_0\bw_0)=-\frac{1}{\tau}\left(\bI-\sum_{|\alpha|\leq1}\bE_{\alpha,\alpha}
    -\frac{1}{D}\sum_{i,j=1}^D\bE_{2e_i,2e_j}\right).
  \]
  It is then easy to check \eqref{eq:Qconstrain} is valid for BGK model.
\item Shakhov model: Direct calculation of \eqref{eq:def_Salpha}
  yields
  \[
    S_{\alpha}=\left\{ \begin{array}{ll}
        0,  &   |\alpha|\leq1,\\
        \dfrac{1-\Pr}{(D+2)\tau}q_i-\dfrac{f_{\alpha}}{\tau}, &
        \alpha=e_i+2e_k,~i,k=1,\cdots,D,\\
        -\dfrac{f_{\alpha}}{\tau},   &   \text{otherwise},
    \end{array} \right.
  \]
  thus 
  \[
    \begin{aligned}
    \bQ^{Shakhov}(\bLambda_0\bw_0)&=
    -\frac{1}{\tau}\left(\bI-\sum_{|\alpha|\leq1}\bE_{\alpha,\alpha}
     -\frac{1}{D}\sum_{i,j=1}^D\bE_{2e_i,2e_j}\right.\\
    &\qquad\qquad\qquad\qquad\quad~~\left.-\frac{1-\Pr}{D+2}\sum_{i,j,k=1}^D(1+2\delta_{ij})\bE_{e_i+2e_k,e_i+2e_j}\right).
    \end{aligned}
    \]
    It is easy again to check \eqref{eq:Qconstrain} is valid for
    Shakhov model.
\item ES-BGK model: Let 
\[
    G_{\alpha}=\left\{ \begin{array}{ll}
        \rho,   &   \alpha=0,\\
        0,  &   |\alpha| \text{ is odd},\\
        \frac{1-1/\Pr}{\alpha_i\rho}\sum_{d=1}^D\sigma_{id}G_{\alpha-e_i-e_d},
        &   |\alpha|\geq2, ~i=1,\cdots,D \text{ and } \alpha_i>0,
    \end{array} \right.
\]
then $S_{\alpha}^{ES-BGK}=\frac{\Pr}{\tau}(G_{\alpha}-f_{\alpha})$.
Direct calculation yields
\begin{equation}
    \bQ^{ES-BGK}(\bLambda_0\bw_0)=-\frac{\Pr}{\tau}\left(\bI-\sum_{|\alpha|\leq2}\bE_{\alpha,\alpha}
    \right)
    -\frac{1}{\tau}\left( \sum_{d=1}^D\bE_{2e_d,2e_d}
    -\frac{1}{D}\sum_{i,j=1}^D\bE_{2e_i,2e_j} \right).
\end{equation}
One then may directly show \eqref{eq:Qconstrain} is valid for ES-BGK model.
\item Binary collision model: It it clear that the symmetry of the
  matrix $\bT^{-1}\bar{\bQ}\bT$ is equivalent to
\begin{equation}\label{eq:Qalphabeta}
    \frac{\alpha!}{\theta_0^{|\alpha|}}Q_{\alpha,\beta}(\bLambda_0\bw_0)=
    \frac{\beta!}{\theta_0^{|\beta|}}Q_{\beta,\alpha}(\bLambda_0\bw_0),
    \quad |\alpha|,|\beta|\leq M,
\end{equation}
where 
\begin{equation}\label{bQab}
    Q_{\alpha,\beta}=\pd{S_{\alpha}}{w_{\beta}},
\end{equation}
and $S_{\alpha}$ is defined in \eqref{eq:def_Salpha}.  Noticing that
at the local Maxwellian
\[
\int_{\bbR^D} \bm Q(f_{eq},f_{eq}) \Healphau(\bxi)
\dd\bxi = 0,
\]
we have
\[
    Q_{\alpha,\beta}(\bLambda_0\bw_0)= \frac{\theta^{|\alpha|}}{\alpha!} 
    \pd{(\frac{\alpha!}{\theta^{|\alpha|}}
    S_{\alpha})}{w_{\beta}}\Big{|}_{\bLambda_0\bw_0}.
\]
Let 
\[
    \bar{S}_{\alpha}=\frac{\alpha!}{\theta^{|\alpha|}}S_{\alpha},
\]
then considering \eqref{eq:def_Salpha}, we have
\[  
    \bar{S}_{\alpha}=\int_{\bbR^D}\int_{\bbR^D}\int_{S_+^{D-1}}
    \Healphau(\bxi)(f'f'_1-ff_1)B(|\bxi-\bxi_1|,\sigma)
    \dd\bn\dd\bxi_1\dd\bxi.
\]
We denote the notations
$\displaystyle \int_{\bbR^D}\int_{\bbR^D}\int_{S_{+}^{D-1}}$,
$B(|\bxi-\bxi_1|,\sigma)$ and $\dd\bn\dd\bxi_1\dd\bxi$ in the last
equation by $\displaystyle\int$, $B$ and $\dd\bm{\tau}$, respectively,
hereafter for convenience. Let $\bV\in\bbR^{N+D+1}$, and
$v_{\alpha}=f_{\alpha}$, $|\alpha|\leq M$, and $v_{N+d}=u_d$, and
$v_{N+D+1}=\theta$, then $\bV$ contains all the variables in $\bw$,
together with velocity and temperature. And thus
\[
    \bQ=\pd{\bS}{\bw}=\pd{\bS}{\bV}\pd{\bV}{\bw},
    \quad \bQ(\bLambda_0\bw_0)=\frac{\theta^{|\alpha|}}{\alpha!}\pd{\bar{\bS}}{\bV}
    \pd{\bV}{\bw}\Big{|}_{\bLambda_0\bw_0},
\]
where $\bar{\bS}=(\bar{S}_{\alpha})$.

Since $\pd{f}{s}|_{\bLambda_0\bw_0}=\pd{f_{eq}}{s}$, $s \in \{u_1,
  ... u_d, \theta \}$, 
and $f'f_1'-ff_1\mid_{\bLambda_0\bw_0}=0$ hold, we have
\[
    \pd{f'f_1'-ff_1}{s}\Big{|}_{\bLambda_0\bw_0}=
    \pd{(f'f_1'-ff_1)|_{\bLambda_0\bw_0}}{s}=0,\quad
	\quad s\in \{u_1,\ldots,u_D, \theta \}.
\]
Hence, $\bar{S}_{\alpha}$  only depends on $f_{\beta}$, $|\beta|\leq M$ and
does NOT depend on $\bu$ and $\theta$. Direct calculations yield
\[
    \begin{aligned}
        \pd{\bar{S}_{\alpha}}{w_{\beta}}\Big{|}_{\bLambda_0\bw_0}&=
        \int \Healphau(\bxi)
        \left[
            \Hbetau(\bxi')\sum_{|\gamma|\leq M}f_{\gamma}\Hgammau(\bxi_1')
            +\Hbetau(\bxi_1')\sum_{|\gamma|\leq M}f_{\gamma}\Hgammau(\bxi')
            \right.\\
        &\qquad-\left.
            \Hbetau(\bxi)\sum_{|\gamma|\leq M}f_{\gamma}\Hgammau(\bxi_1)
            -\Hbetau(\bxi_1)\sum_{|\gamma|\leq M}f_{\gamma}\Hgammau(\bxi)
            \right]B\dd \bm{\tau}\Big{|}_{\bLambda_0\bw_0}\\
        &= \int \Healphau(\bxi)
        \rho\omegau(\bxi)\omegau(\bxi_1)
        L(\beta) B\dd\bm{\tau}\Big{|}_{\bLambda_0\bw_0}\\
        &= - \frac{1}{4}\int \rho\omegau(\bxi)\omegau(\bxi_1)
        L(\alpha)L(\beta) B\dd\bm{\tau}\Big{|}_{\bLambda_0\bw_0},\\
    \end{aligned}
\]
where
$L(\alpha)= \Healphau(\bxi') + \Healphau(\bxi_1') - \Healphau(\bxi) -
\Healphau(\bxi_1)$.
Here the third equality is due to the symmetry of $\bxi$, $\bxi'$ and
$\bxi_1$, $\bxi_1'$, and the fact that the collision kernel $B$
preserves its formation once exchanging the variables
$(\bxi,\bxi_1)\leftrightarrow(\bxi',\bxi_1')$ and
$(\bxi,\bxi')\leftrightarrow(\bxi_1,\bxi_1')$ (see
\cite{Struchtrup}). Obviously, we have
\begin{equation}
    \pd{\bar{S}_{\alpha}}{w_{\beta}}\Big{|}_{\bLambda_0\bw_0}
    =\pd{\bar{S}_{\beta}}{w_{\alpha}}\Big{|}_{\bLambda_0\bw_0},
\end{equation}
which indicates $\bT^{-1}\bar{\bQ}\bT$ is symmetric.
Since $\rho\omegau(\bxi)\omegau(\bxi_1)B>0$ holds,
the matrix $\bT^{-1}\bar{\bQ}\bT$ is symmetric and negative
semi-definite.
\end{itemize}

This proved Lemma \ref{Qsymmetric}, so did Theorem
\ref{thm:linearstability}.

\section{Yong's First Stability Condition}\label{sec:Yong}

Now we examine Yong's first stability condition
\cite{yong1999singular} for HME \eqref{HMEmatrix}. The equation
\eqref{eq:equilibriummanifold} indicates that the equilibrium
manifold, denoted by $\mE$ hereafter, for HME is the local
equilibrium, which is denoted by $\bw_{eq}$ in this section. Since the
momentum is conserved, flow velocity does not change the collision
term. Due to the Galilean transformation invariance of the model, the
variation in the flow velocity is only a translation of the
system. Hence, the value of the flow velocity $\bu$ does not matter
in our discussion in this section, thus we let $\bu=0$ without loss of
generality. Each state in $\mE$ can be uniquely determined by the
density $\rho$ and the temperature $\theta$, so if we let
$\bLambda_0\bw_0=\bw_{eq}$, then all the results in Section
\ref{sec:stability} are still valid. In the following, let us directly
verify Yong's first stability condition for HME:
\begin{itemize}
\item Condition 1: 
Let
\begin{equation}
    \hat{\bP}=\bI+\sum_{i=2}^D\bE_{2e_1,2e_i},
\end{equation}
then the conservation law \eqref{eq:conservedquantities} indicates
that the first $D+2$ rows of $\hat{\bP}\bQ(\bw_{eq})$ are zeros, and
the equation \eqref{eq:conservedquantities2} indicates the other
rows are full row rank. Hence, there exists an invertible
$(N-D-2)\times(N-D-2)$ matrix $\hat{\bQ}(\bw_{eq})$ such that
\[
    \hat{\bP}\bQ(\bw_{eq}) =
    \left(\begin{array}{cc}
        0 & 0\\
        0 & \hat{\bQ}(\bw_{eq})
    \end{array}\right) \hat{\bP}.
\]

\item Condition 2:
Since $\bM_d$, $d=1,\cdots,D$ only depends on $\bw_{eq}$, we have 
\begin{equation}
    \bM_d(\bw)=\bM_d(\bw_{eq})
    =\sqrt{\theta}\bLambda_1\bar{\bM}_d\bLambda_1^{-1}
    =\sqrt{\theta}\bLambda_1\bT(\bT^{-1}\bar{\bM}_d\bT)(\bLambda_1\bT)^{-1},
    \quad d=1,\cdots,D.
\end{equation}
Let 
\begin{equation}\label{eq:YongbA0}
    \bA_0(\bw)=( (\bLambda_1\bT)^{-1}\bD(\bw))^T(
    (\bLambda_1\bT)^{-1}\bD(\bw)),
\end{equation}
then 
\[
    \bA_0\bA_d=\sqrt{\theta}
    ( (\bLambda_1\bT)^{-1}\bD)^T
    (\bT^{-1}\bar{\bM}_d\bT)(\bLambda_1\bT)^{-1}\bD
\]
is symmetric, thus \eqref{eq:Yong2} holds.

\item Condition 3:
The definition of $\bar{\bD}$ and the definition of $\bar{\bQ}$
\eqref{eq:barDMQ} indicate that
\[
    \bD(\bw_{eq})=\bLambda_1\bar{\bD}\bLambda_0^{-1},
    \qquad
    \bQ(\bw_{eq})=
    \frac{\sqrt{\theta}}{L}\bLambda_1\bar{\bQ}\bLambda_1^{-1}
    = \frac{\sqrt{\theta}}{L}\bLambda_1\bT\left( \bT^{-1}\bar{\bQ}\bT
    \right)(\bLambda_1\bT)^{-1}.
\]
Direct calculation yields
\[
    \begin{aligned}
        \bD(\bw_{eq})\bQ(\bw_{eq})
        &= \frac{\sqrt{\theta}}{L}
        \bLambda_1\bar{\bD}\bLambda_0^{-1}\bLambda_1\bar{\bQ}\bLambda_1^{-1}\\
        &=\frac{\sqrt{\theta}}{L} \bLambda_1\bar{\bD}\bar{\bQ}\bLambda_1^{-1}
        =\frac{\sqrt{\theta}}{L}
        \bLambda_1\bar{\bQ}\bLambda_1^{-1}=\bQ(\bw_{eq}),\\
    \end{aligned}
\]
where the first equality is obtained by
$\bLambda_0^{-1}\bLambda_1\bar{\bQ}=\bar{\bQ}$, and the relation
$\bar{\bD}\bar{\bQ} = \bar{\bQ}$, derived in the proof of Lemma
\ref{lem:DQD}, is used in the second equality. Analogously, we
have
\[
    \begin{aligned}
        \bQ(\bw_{eq})\bD(\bw_{eq})&=
        \frac{\sqrt{\theta}}{L}
        \bLambda_1\bar{\bQ}\bLambda_1^{-1}\bLambda_1\bar{\bD}\bLambda_0^{-1}\\
        &=
        \frac{\sqrt{\theta}}{L} \bLambda_1\bar{\bQ}\bLambda_0^{-1}
        =\frac{\sqrt{\theta}}{L}
        \bLambda_1\bar{\bQ}\bLambda_1^{-1}=\bQ(\bw_{eq}),\\
    \end{aligned}
\]
due to Lemma \ref{lem:DQD} and
$\bar{\bQ}\bLambda_0^{-1}=\bar{\bQ}\bLambda_1^{-1}$.  Thus, we have
\[
    \begin{aligned}
        \bA_0(\bw_{eq})\bQ(\bw_{eq})&=
        ((\bLambda_1\bT)^{-1}\bD(\bw_{eq}))^T(
        (\bLambda_1\bT)^{-1}\bD(\bw_{eq}))\bQ(\bw_{eq})\\
        &= ((\bLambda_1\bT)^{-1}\bD(\bw_{eq}))^T(
        (\bLambda_1\bT)^{-1})\bQ(\bw_{eq})\\
        &=\frac{\sqrt{\theta}}{L}\bD^T(\bw_{eq})(\bLambda_1\bT)^{-T}(\bT^{-1}\bar{\bQ}\bT)
        (\bLambda_1\bT)^{-1}\\
        &=\frac{\sqrt{\theta}}{L}\left((\bLambda_1\bT)^{-T}(\bT^{-1}\bar{\bQ}\bT)
        (\bLambda_1\bT)^{-1}\bD(\bw_{eq})\right)^T\\
        &=\left((\bLambda_1\bT)^{-T}(\bLambda_1\bT)^{-1}\bQ(\bw_{eq})\bD(\bw_{eq})\right)^T\\
        &=\left((\bLambda_1\bT)^{-T}(\bLambda_1\bT)^{-1}\bQ(\bw_{eq})\right)^T\\
        &=(\bLambda_1\bT)^{-T}(\bT^{-1}\bar{\bQ}\bT)(\bLambda_1\bT)^{-1}.
    \end{aligned}
\]
where $-T$ stands for transposition of inverse. It is clear that this
is a symmetric matrix. Since $\bT^{-1}\bar{\bQ}\bT$ is symmetric
negative semi-definite, there exists an invertible matrix $\bP_1$
subject to
\[
    \bT^{-1}\bar{\bQ}\bT=-\bP_1^T
    \left(\begin{array}{cc}
        0 & 0\\
        0 & \bI_{N-D-2}
    \end{array}\right)\bP_1.
\]
Therefore, there exists an invertible matrix $\bP$ subject to
both \eqref{eq:Yong1} and \eqref{eq:Yong3}.
\end{itemize}

This gives us the following theorem to end this section:
\begin{theorem}\label{thm:Yong}
    HME satisfies Yong's first stability condition.
\end{theorem}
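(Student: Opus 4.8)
The plan is to verify, one at a time, the three items of Yong's stability condition on the equilibrium manifold $\mE$, recycling the symmetrizer $\bT$ and the algebraic identities of Section~\ref{sec:stability}. First I would normalise the setting: since momentum is conserved and the model is Galilean invariant, the flow velocity only translates the system and does not enter the collision term, so I may set $\bu=\boldsymbol{0}$; every state of $\mE$ is then determined by $(\rho,\theta)$, and taking $\bLambda_0\bw_0=\bw_{eq}$ makes Lemma~\ref{Qsymmetric} and Lemma~\ref{lem:DQD} available at each equilibrium state, with all the $\bLambda_0$, $\bLambda_1$, $\bD$, $\bM_d$ now functions of $(\rho,\theta)$.

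For Condition~1 the point is that $Q(f,f)$ conserves exactly $1$, $\bxi$, $|\bxi|^2$, so $\bQ(\bw_{eq})$ annihilates the $D+1$ rows indexed by $\bzero$ and $e_i$ together with the \emph{sum} of the $D$ rows indexed by $2e_d$, although the individual $2e_d$ rows need not vanish. I would therefore take $\hat{\bP}=\bI+\sum_{i=2}^{D}\bE_{2e_1,2e_i}$, which turns those $D$ rows into one zero row plus $D-1$ rows that, by the converse conservation statement~\eqref{eq:conservedquantities2}, form a full-rank block together with the remaining rows; this produces an invertible $(N-D-2)\times(N-D-2)$ matrix $\hat{\bQ}(\bw_{eq})$ with $\hat{\bP}\bQ(\bw_{eq})=\diag(0,\hat{\bQ}(\bw_{eq}))\,\hat{\bP}$. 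For Condition~2 I would use that $\bM_d$ depends only on the equilibrium state and that $\bT^{-1}\bar{\bM}_d\bT$ is symmetric (established in Section~\ref{sec:stability}), and set $\bA_0(\bw)=((\bLambda_1\bT)^{-1}\bD(\bw))^T((\bLambda_1\bT)^{-1}\bD(\bw))$, which is symmetric positive definite since $\bD$ is invertible; unwinding $\bA_d=\bD^{-1}\bM_d\bD$ then exhibits $\bA_0\bA_d$ as $\sqrt{\theta}$ times a congruence of the symmetric matrix $\bT^{-1}\bar{\bM}_d\bT$, hence symmetric, which is~\eqref{eq:Yong2}.

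Condition~3 is the heart of the matter. The strategy is to show that $\bA_0(\bw_{eq})\bQ(\bw_{eq})$ is itself symmetric and negative semi-definite, so that the left-hand side of~\eqref{eq:Yong3} equals $2\bA_0\bQ$, and then to realise $-\bA_0\bQ$ as $\tfrac12\bP^T\diag(0,\bI_r)\bP$ with a $\bP$ that is simultaneously usable in~\eqref{eq:Yong1}. For the symmetry I would first record the identities $\bD(\bw_{eq})\bQ(\bw_{eq})=\bQ(\bw_{eq})=\bQ(\bw_{eq})\bD(\bw_{eq})$, which follow from $\bar{\bD}\bar{\bQ}=\bar{\bQ}=\bar{\bQ}\bar{\bD}^{-1}$ (obtained in the proof of Lemma~\ref{lem:DQD}) together with $\bLambda_0^{-1}\bLambda_1\bar{\bQ}=\bar{\bQ}$ and $\bar{\bQ}\bLambda_0^{-1}=\bar{\bQ}\bLambda_1^{-1}$, the latter two being consequences of the vanishing of the $e_d$ rows and columns of $\bar{\bQ}$. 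A direct computation then collapses $\bA_0\bQ$ to a positive multiple of $(\bLambda_1\bT)^{-T}(\bT^{-1}\bar{\bQ}\bT)(\bLambda_1\bT)^{-1}$, a congruence of the symmetric negative semi-definite matrix $\bT^{-1}\bar{\bQ}\bT$ of Lemma~\ref{Qsymmetric}; symmetry and negative semi-definiteness of $\bA_0\bQ$ follow at once, and since $\bA_0$ is symmetric this also gives $\bQ^T\bA_0=\bA_0\bQ$, so the left-hand side of~\eqref{eq:Yong3} is indeed $2\bA_0\bQ$.

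The remaining, and I expect main, obstacle is to produce a \emph{single} invertible $\bP$ satisfying both~\eqref{eq:Yong1} and~\eqref{eq:Yong3}. Writing $\bT^{-1}\bar{\bQ}\bT=-\bP_1^T\diag(0,\bI_{N-D-2})\bP_1$ and setting $\bP\propto\bP_1(\bLambda_1\bT)^{-1}$ gives a factorisation realising~\eqref{eq:Yong3}; one must choose $\bP_1$ so that the first $D+2$ rows of the resulting $\bP$ lie in the left null space of $\bQ(\bw_{eq})$, as demanded by~\eqref{eq:Yong1}. This can be arranged because $\bT^{-1}\bar{\bQ}\bT$ has rank exactly $N-D-2=r$, so the $(D+2)$-dimensional degenerate subspace of the congruence and the null space singled out by $\hat{\bP}$ in Condition~1 are the same object, both governed by the collision invariants $1$, $\bxi$, $|\bxi|^2$; the factorisation freedom $\bP_1\mapsto O\bP_1$, with $O$ orthogonal on the non-degenerate block and invertible on the degenerate one, can then be used to align them. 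Once this reconciliation is performed,~\eqref{eq:Yong3} holds with equality, a fortiori as an inequality, and Theorem~\ref{thm:Yong} follows.
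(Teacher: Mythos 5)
Your proposal is correct and follows essentially the same route as the paper: the same $\hat{\bP}=\bI+\sum_{i=2}^{D}\bE_{2e_1,2e_i}$ for Condition~1, the same symmetrizer $\bA_0=((\bLambda_1\bT)^{-1}\bD)^T((\bLambda_1\bT)^{-1}\bD)$ for Condition~2, and the same identities $\bD(\bw_{eq})\bQ(\bw_{eq})=\bQ(\bw_{eq})=\bQ(\bw_{eq})\bD(\bw_{eq})$ collapsing $\bA_0\bQ$ to a congruence of the symmetric negative semi-definite matrix $\bT^{-1}\bar{\bQ}\bT$ for Condition~3. Your closing discussion of how to align a single invertible $\bP$ satisfying both \eqref{eq:Yong1} and \eqref{eq:Yong3} via the factorisation of $\bT^{-1}\bar{\bQ}\bT$ is in fact more explicit than the paper, which simply asserts that such a $\bP$ exists.
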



\section{Stability Analysis of OHME}
\label{sec:grad13}
In Grad's moment method, there are two groups of moment systems. One is
choosing the basis function as
\[
    \left\{\Halphau(\bxi):|\alpha|\leq M\right\},
\]
which gives us the reduced models with $20$, $35$, $56$, $84$,
$\cdots$ moments for $D=3$.  Grad's 20 moment system is the most popular
one of them. HME are globally hyperbolic regularized version of this
group of Grad's moment system. The other one is choosing the basis
function as
\[
    \left\{\Halphau(\bxi):|\alpha|\leq M-1\right\}\bigcup 
    \left\{\sum_{d=1}^D\mathcal{H}_{\alpha+2e_d}^{[\bu,\theta]}(\bxi):|\alpha|=M-2\right\},
\]
which gives us moment system with $13$, $26$, $45$, $71$, $\cdots$
moments for $D=3$. In this group, Grad's 13 moment system is
definitely the most famous one. Following \cite{Torrilhon2015}, we
called this set of moment system as ordered Grad's moment system.

As the most important Grad's moment system, Grad's 13 moment equations
\cite{Grad} draw a lot of authors' attention in the past six decades.
Due to the lack of hyperbolicity, a globally hyperbolic
regularization, similarly as that for $M$-order Grad's moment system,
is required. In \cite{framework}, the authors extended the globally
hyperbolic regularization in \cite{Fan,Fan_new} into a framework to
derive moment equations from kinetic equations. By applying the
framework on Grad's 13 moment system, the authors proposed a globally
hyperbolic 13 moment equations (HME13). In \cite{Fan2015}, the authors
applied the globally hyperbolic regularization on ordered Grad's
moment system to obtain the Ordered Hyperbolic Moment Equations(OHME),
and pointed out that $M$-th order OHME can be derived from $M$-th
order HME.

Denote $N_O$ by the number of equations of $M$-th order OHME, and let
\[
    \bPb=\sum_{|\alpha|\leq M-1}\hat{\bE}_{\alpha,\alpha}
    +\sum_{d=1}^D\sum_{|\alpha|=M-2}\hat{\bE}_{\alpha+2e_1,\alpha+2e_d},
\]
where $\hat{\bE}_{\alpha,\beta}\in\bbR^{N_O\times N}$ is the matrix with
all its entries to be $0$, except for the only entry with row index
$\alpha$ and column index $\beta$ to be $1$.
We define the diagonal matrix $\bT_O\in\bbR^{N_O\times N_O}$ as
\[
    \bT_O=\sum_{|\alpha|\leq M-1}\frac{1}{\sqrt{\alpha!}}\bE^O_{\alpha,\alpha}
    +\sum_{|\alpha|=M-2}\frac{1}{\sqrt{\sum_{d=1}^D(\alpha+2e_d)!}}\bE^O_{\alpha+2e_1,\alpha+2e_1},
\]
where $\bE^O_{\alpha,\alpha}\in\bbR^{N_O\times N_O}$ has the
same definition as $\bE_{\alpha,\alpha}$. Let 
\begin{equation}\label{eq:def_bPp}
    \bPp=\bT_O^2\bPb(\bT^2)^{-1},
\end{equation}
then OHME can be written as \cite{Fan2015}
\begin{equation}\label{eq:OHMEoriginal}
    \begin{aligned}
        \bPp\bD(\bPb^T\bPp\bw)\bPb^T \pd{\bPp\bw}{t}
        &+\sum_{d=1}^D
        \bPp\bM_d(\bPb^T\bPp\bw)\bPb^T\bPp\bD(\bPb^T\bPp\bw)\bPb^T\pd{\bPp\bw}{x_d}\\
        &\qquad\qquad\qquad=\bPp\bS(\bPb^T\bPp\bw).
    \end{aligned}
\end{equation}
Let 
\[
    \begin{aligned}
        \bw_O&=\bPp\bw, &
        \bD_O(\bw_O)&=\bPp\bD(\bPb^T\bw_O)\bPb^T,\\
        \bS_O&=\bPp\bS(\bPb^T\bw_O),    &
        \bM_{O,d}(\bw_O)&=\bPp\bM_d(\bPb^T\bw_O)\bPb^T,
        \quad d=1,\cdots,D,
    \end{aligned}
\]
then \eqref{eq:OHMEoriginal} can be reformulated as
\begin{equation}\label{eq:OHME}
    \bD_O(\bw_O)\pd{\bw_O}{t}
    +\sum_{d=1}^D\bM_O(\bw_O)\bD_O(\bw_O)\pd{\bw_O}{x_d}
    =\bS_O(\bw_O).
\end{equation}

We claim that for this system \eqref{eq:OHME}, it is linearly stable and
fulfils Yong's first stability condition, exactly the same as HME we
studied in the last sections.

Using the same linearization as in Section \ref{sec:stability} on OHME, 
we obtain the linearized OHME as
\begin{equation}\label{eq:LinearizedOHME}
    \bar{\bD}^O\pd{\bar{\bw}_O}{t}+\sum_{d=1}^D\bar{\bM}_O\bar{\bD}_O\pd{\bar{\bw}_O}{x_d}
    =\bar{\bQ}_O\bar{\bw}_O,
\end{equation}
where 
\[
    \begin{aligned}
        \bar\bw_O&=\bPp\bar\bw, &
        \bar\bD_O&=\bPp\bar\bD\bPb^T,\\
        \bar\bQ_O&=\bPp\bar\bQ\bPb^T,   &
        \bar\bM_{O,d}&=\bPp\bar\bM_d\bPb^T,
        \quad d=1,\cdots,D.
    \end{aligned}
\]
Noticing the discussion in Section \ref{sec:stability}, we can prove OHME
is also linearly stable both in space and in time at the local
Maxwellian, once Lemma \ref{Qsymmetric} and Lemma \ref{lem:DQD} are
valid for $\bar{\bD}_O$, $\bar{\bM}_{O,d}$ and $\bar{\bQ}_O$.

Actually, due to \eqref{eq:def_bPp}, we find that both
\[
    \bT_O^{-1}\bar{\bM}_{O,d}\bT_O=(\bT^{-1}\bPb^T\bT_O)^T(\bT^{-1}\bar{\bM}_d\bT)
    (\bT^{-1}\bPb^T\bT_O),\quad d=1,\dots,D,
\]
and 
\[
    \bT_O^{-1}\bar{\bQ}_O\bT_O=(\bT^{-1}\bPb^T\bT_O)^T(\bT^{-1}\bar{\bQ}\bT)
    (\bT^{-1}\bPb^T\bT_O)
\]
are symmetric matrices. Noticing here $\bT$ and $\bT_O$ are diagonal
matrices, we obtain that Lemma \ref{Qsymmetric} is valid for
$\bar{\bM}_{O,d}$ and $\bar{\bQ}_O$.

The equation $\bar{\bD}^{-1}\bar{\bQ}=\bar{\bQ}$ is valid, since the
collision operator has $D+2$ conserved quantities and all entries of
$\bar{\bD}-\bI$ are zeroes except for some entries with row and column
indices corresponding to these conserved quantities. Since $\bPp$ and
$\bPb$ only change entries with row and column indices corresponding to
$|\alpha|>M-1$, Lemma \ref{lem:DQD} is still valid for $\bar{\bD}_O$
and $\bar{\bQ}_O$. Furthermore, we have
\begin{equation}\label{eq:OHME_DQD}
    \bD_O(\bw_{eq}^O)\bQ_O(\bw_{eq}^O)\bD_O(\bw_{eq}^O)=\bQ_O(\bw_{eq}^O),
\end{equation}
where $\bQ_O=\pd{\bS_O}{\bw_O}$.
Hence, we have the following corollary.
\begin{corollary}\label{cor:OHMELS}
  The linearized OHME \eqref{eq:LinearizedOHME} is stable both in
  space and in time. OHME \eqref{eq:OHME} is linearly stable both in
  space and in time at the local Maxwellian.
\end{corollary}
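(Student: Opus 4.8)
The plan is to replay the proof of Theorem~\ref{thm:linearstability} essentially verbatim, with $\bT$, $\bar{\bD}$, $\bar{\bM}_d$, $\bar{\bQ}$ replaced by their ordered counterparts $\bT_O$, $\bar{\bD}_O$, $\bar{\bM}_{O,d}$, $\bar{\bQ}_O$. The two structural ingredients needed have already been recorded in the preceding paragraphs: the conjugation identities obtained from $\bPp=\bT_O^2\bPb(\bT^2)^{-1}$ show that $\bT_O^{-1}\bar{\bM}_{O,d}\bT_O$ is symmetric for every $d$ and that $\bT_O^{-1}\bar{\bQ}_O\bT_O$ is symmetric negative semi-definite (the OHME analogue of Lemma~\ref{Qsymmetric}); and the observation that $\bPp$, $\bPb$ act only on indices with $|\alpha|>M-1$ yields the OHME analogue of Lemma~\ref{lem:DQD}, namely $\bar{\bD}_O^{-1}\bar{\bQ}_O\bar{\bD}_O=\bar{\bQ}_O$. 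One also notes here that $\bar{\bD}_O$ is invertible: just as for $\bar{\bD}$, it is the identity minus a matrix supported on the few rows and columns attached to the conserved moments.

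First I would put \eqref{eq:LinearizedOHME} into the reduced form
\[
  \pd{\bar{\bw}_O}{t}+\sum_{d=1}^D\bar{\bA}_{O,d}\pd{\bar{\bw}_O}{x_d}=\bar{\bQ}_O\bar{\bw}_O,
  \qquad \bar{\bA}_{O,d}=\bar{\bD}_O^{-1}\bar{\bM}_{O,d}\bar{\bD}_O,
\]
by left-multiplying with $\bar{\bD}_O^{-1}$ and using $\bar{\bD}_O^{-1}\bar{\bQ}_O=\bar{\bQ}_O$, which follows from the OHME version of Lemma~\ref{lem:DQD}. For stability in time, fix $\bk\in\bbR^D$; since $\sum_{d=1}^D k_d\,\bT_O^{-1}\bar{\bM}_{O,d}\bT_O$ is symmetric and $-\bT_O^{-1}\bar{\bQ}_O\bT_O$ is symmetric positive semi-definite, Lemma~\ref{eigenvaluenonnegative} gives that every eigenvalue of $\sum_{d=1}^D k_d\,\bT_O^{-1}\bar{\bM}_{O,d}\bT_O-\imag\,\bT_O^{-1}\bar{\bQ}_O\bT_O$ has non-negative imaginary part. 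Conjugating by $\bT_O^{-1}\bar{\bD}_O$ and invoking $\bar{\bD}_O^{-1}\bar{\bQ}_O\bar{\bD}_O=\bar{\bQ}_O$ yields
\[
  \sum_{d=1}^D k_d\bar{\bA}_{O,d}-\imag\bar{\bQ}_O
  =(\bT_O^{-1}\bar{\bD}_O)^{-1}\Big(\sum_{d=1}^D k_d\,\bT_O^{-1}\bar{\bM}_{O,d}\bT_O-\imag\,\bT_O^{-1}\bar{\bQ}_O\bT_O\Big)(\bT_O^{-1}\bar{\bD}_O),
\]
so the matrix governing the dispersion relation \eqref{singularity} has the same spectrum, whence $\Omega_i(\bk)\ge0$.

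Stability in space in one dimension is the identical computation with Lemma~\ref{eigenvaluek} in place of Lemma~\ref{eigenvaluenonnegative}: for each $\Omega>0$ the roots $k$ of $\det(k\bar{\bA}_{O,1}-\imag\bar{\bQ}_O-\Omega\bI)=0$ coincide, by the same similarity transformation, with the roots of $\det(k\,\bT_O^{-1}\bar{\bM}_{O,1}\bT_O-\imag\,\bT_O^{-1}\bar{\bQ}_O\bT_O-\Omega\bI)=0$, and Lemma~\ref{eigenvaluek} gives $k_rk_i\le0$. Finally, since \eqref{eq:LinearizedOHME} is by construction the linearization of \eqref{eq:OHME} at the local Maxwellian, linear stability of OHME in both space and time follows from the Definition of stability.

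There is essentially no new obstacle here; the whole matter is bookkeeping, namely checking that the similarity structure and the two lemmas survive the reduction map $\bw\mapsto\bPp\bw$. The only point deserving a word of care is that the extra contraction $\sum_{d}\mathcal{H}_{\alpha+2e_d}^{[\bu,\theta]}$ built into $\bPb$ does not disturb the conserved-moment block structure of $\bar{\bD}_O-\bI$, so that $\bar{\bD}_O$ is indeed invertible and $\bar{\bD}_O^{-1}\bar{\bQ}_O=\bar{\bQ}_O=\bar{\bQ}_O\bar{\bD}_O^{-1}$ still hold; this reduces to noting that the affected rows and columns involve only moments with $|\alpha|\le M-1$, which are untouched by $\bPp$ and $\bPb$.
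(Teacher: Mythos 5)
Your proposal is correct and follows essentially the same route as the paper: establish the OHME analogues of Lemma~\ref{Qsymmetric} (via the congruence identities coming from $\bPp=\bT_O^2\bPb(\bT^2)^{-1}$) and of Lemma~\ref{lem:DQD} (via the observation that $\bPp$, $\bPb$ leave the conserved-moment block untouched), then replay the proof of Theorem~\ref{thm:linearstability} with the ordered matrices. The paper states this more tersely, but the substance is identical; your extra remark on the invertibility of $\bar{\bD}_O$ is a harmless and reasonable addition.
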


Following Section \ref{sec:Yong}, here we verify that Yong's first
stability condition is satisfied for OHME, making use of the
connections \eqref{eq:OHME} between HME and OHME.
Precisely, we have the following theorem.
\begin{theorem}\label{thm:OHMESC}
    OHME satisfies Yong's first stability condition.
\end{theorem}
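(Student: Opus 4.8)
The plan is to follow the verification in Section~\ref{sec:Yong} almost verbatim, replacing every HME object by its OHME counterpart and feeding in the facts already collected in this section: that $\bT_O^{-1}\bar{\bM}_{O,d}\bT_O$ ($d=1,\dots,D$) is symmetric, that $\bT_O^{-1}\bar{\bQ}_O\bT_O$ is symmetric negative semi-definite, and the relations $\bar{\bD}_O\bar{\bQ}_O=\bar{\bQ}_O$, $\bar{\bQ}_O\bar{\bD}_O^{-1}=\bar{\bQ}_O$ together with \eqref{eq:OHME_DQD}. I would first record the OHME analogues $\bLambda_0^O$, $\bLambda_1^O$ of the diagonal scaling matrices of Section~\ref{sec:stability} (these exist because $\bLambda_0$, $\bLambda_1$ act by a single scalar on every column index $\alpha+2e_d$ of the ordered block, so they commute past $\bPb^T$ and $\bPp$), yielding $\bM_{O,d}(\bw_{eq}^O)=\sqrt{\theta}\,\bLambda_1^O\bar{\bM}_{O,d}(\bLambda_1^O)^{-1}$ and $\bD_O(\bw_{eq}^O)=\bLambda_1^O\bar{\bD}_O(\bLambda_0^O)^{-1}$, exactly as in the displayed identities at the start of Section~\ref{sec:Yong}.

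For Condition~1 I would take an invertible permutation-type matrix $\hat{\bP}_O$ collapsing the rows of $\bQ_O(\bw_{eq}^O)$ indexed by $2e_1,\dots,2e_D$ onto the $2e_1$ row, in analogy with $\hat{\bP}$. Since the ordered basis still carries only $1$, $\bxi$ and $|\bxi|^2$ as collision invariants — OHME possesses exactly $D+2$ conserved quantities, the reduction $\bPp$ acting as the identity on the low-order block — the conservation law \eqref{eq:conservedquantities} forces the first $D+2$ rows of $\hat{\bP}_O\bQ_O(\bw_{eq}^O)$ to vanish, while \eqref{eq:conservedquantities2} shows the remaining $N_O-D-2$ rows are of full row rank; this produces an invertible $(N_O-D-2)\times(N_O-D-2)$ matrix $\hat{\bQ}_O$ realising the block decomposition \eqref{eq:Yong1}.

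For Condition~2 I would set $\bA_0^O(\bw_O)=\big((\bLambda_1^O\bT_O)^{-1}\bD_O(\bw_O)\big)^{T}\big((\bLambda_1^O\bT_O)^{-1}\bD_O(\bw_O)\big)$, which is symmetric positive definite since $\bD_O$ is invertible (inheriting the triangular, nonzero-diagonal structure of $\bD$); then, with $\bA_{O,d}=\bD_O^{-1}\bM_{O,d}\bD_O$ and the scaling identity above, $\bA_0^O\bA_{O,d}=\sqrt{\theta}\,\big((\bLambda_1^O\bT_O)^{-1}\bD_O\big)^{T}(\bT_O^{-1}\bar{\bM}_{O,d}\bT_O)(\bLambda_1^O\bT_O)^{-1}\bD_O$ is symmetric, which is \eqref{eq:Yong2}. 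For Condition~3 I would transfer the equilibrium identities $\bD\bQ=\bQ$, $\bQ\bD=\bQ$ to OHME, i.e. $\bD_O(\bw_{eq}^O)\bQ_O(\bw_{eq}^O)=\bQ_O(\bw_{eq}^O)=\bQ_O(\bw_{eq}^O)\bD_O(\bw_{eq}^O)$, by repeating the two short computations at the end of Section~\ref{sec:Yong} line for line, using $\bar{\bD}_O\bar{\bQ}_O=\bar{\bQ}_O$, \eqref{eq:OHME_DQD}, the scaling relations and $\bar{\bQ}_O(\bLambda_0^O)^{-1}=\bar{\bQ}_O(\bLambda_1^O)^{-1}$. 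Feeding these into $\bA_0^O\bQ_O$ gives, as there, $\bA_0^O(\bw_{eq}^O)\bQ_O(\bw_{eq}^O)=(\bLambda_1^O\bT_O)^{-T}(\bT_O^{-1}\bar{\bQ}_O\bT_O)(\bLambda_1^O\bT_O)^{-1}$, a symmetric matrix; since $\bT_O^{-1}\bar{\bQ}_O\bT_O$ is symmetric negative semi-definite it factors as $-\bP_1^{T}\diag(0,\bI_{N_O-D-2})\bP_1$ for some invertible $\bP_1$, and combining $\bP_1$ with $\hat{\bP}_O$ yields an invertible $\bP_O$ fulfilling both \eqref{eq:Yong1} and \eqref{eq:Yong3}.

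The main obstacle I expect is the bookkeeping in Condition~1: one must check carefully that passing from HME to OHME through the non-square reduction matrices $\bPp\in\bbR^{N_O\times N}$ and $\bPb$ does not disturb the collision-invariant structure, so that $\bQ_O(\bw_{eq}^O)$ really has cokernel of dimension exactly $D+2$ spanned by the images of $1$, $\bxi$, $|\bxi|^2$ and hence $\hat{\bQ}_O$ is genuinely invertible — equivalently, that the ordered closure creates no spurious conserved quantities. Once this is settled, Conditions~2 and~3 are a direct transcription of Section~\ref{sec:Yong} with the diagonal matrices $\bT$, $\bLambda_0$, $\bLambda_1$ replaced by $\bT_O$, $\bLambda_0^O$, $\bLambda_1^O$.
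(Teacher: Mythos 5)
Your proposal is correct and follows essentially the same route as the paper: the same symmetrizer $\bA_0^O(\bw_O)=\bigl((\bLambda_1^O\bT_O)^{-1}\bD_O(\bw_O)\bigr)^T\bigl((\bLambda_1^O\bT_O)^{-1}\bD_O(\bw_O)\bigr)$, the same scaling matrices, and the same transfer of the equilibrium identities $\bD_O\bQ_O=\bQ_O=\bQ_O\bD_O$ via \eqref{eq:OHME_DQD}. The only cosmetic difference is in Condition 1, where you rerun the conservation-law argument directly while the paper conjugates the HME block decomposition through the intertwining relations $\hat{\bP}_O\bPp=\bPp\hat{\bP}$ and $\hat{\bP}_O\bPb=\bPb\hat{\bP}$; the invertibility of $\hat{\bQ}_O$, which you rightly flag as the one delicate point, is handled at the same level of detail in the paper.
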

\begin{proof}
Let us verify all three equalities one by one:
\begin{itemize}
\item Condition 1:
Let $\bw_{eq}^O=\bPp\bw_{eq}$. Direct calculations yield
\[
    \bQ_O(\bw_O):=\pd{\bS_O(\bw_O)}{\bw_O}=\bPp\pd{\bS(\bPb^T\bw_O)}{\bw}
    \pd{\bPp^T\bw_O}{\bw_O}=\bPp\bQ(\bw)\bPb^T.
\]
Let 
\[
    \hat{\bP}_O=\bI+\sum_{i=2}^D\bE^O_{2e_1,2e_i},
\]
then we have $\hat{\bP}_O\bPp=\bPp\hat{\bP}$ and 
$\hat{\bP}_O\bPb=\bPb\hat{\bP}$, and thus
\[
    \hat{\bP}_O\bQ_O(\bw_{eq}^O)=\bPp\hat{\bP}\bQ(\bw_{eq}^O)\bPb^T=
    \bPp \left(\begin{array}{cc}
        0 & 0\\
        0 & \hat{\bQ}(\bw_{eq})
    \end{array}\right) \bPb^T\hat{\bP}_O
    =\left(\begin{array}{cc}
        0 & 0\\
        0 & \hat{\bQ}_O(\bw_{eq})
    \end{array}\right) \hat{\bP}_O,
\]
where $\hat{\bQ}_O\in\bbR^{(N_O-D-2)\times(N_O-D-2)}$ is an invertible
matrix.
\item Condition 2:
Let $\bLambda_1^O=\bPp\bLambda_1\bPb^T$, then one is easy to see that
\[
    \bM_{O,d}=\bPp\bM_d\bPp^T=\sqrt{\theta}\bLambda_1^O\bT_0
    \bPb^T(\bT^{-1}\bar{\bM}_d\bT)\bPb(\bLambda_1^O\bT_O)^{-1}.
\]
Let
\[
    \bA_0^O(\bw_O)=
    =( (\bLambda_1^O\bT_O)^{-1}\bD_O(\bw_O))^T(
    (\bLambda_1^O\bT_O)^{-1}\bD_O(\bw_O)),
\]
then 
\[
    \bA_0^O\bA_d^O=\sqrt{\theta}
    ( (\bLambda_1^O\bT_O)^{-1}\bD_O)^T\bPb
    (\bT^{-1}\bar{\bM}_{O,d}\bT)\bPb^T(\bLambda_1^O\bT_O)^{-1}\bD_O
\]
is symmetric, thus \eqref{eq:Yong2} holds.

\item Condition 3:
Noticing \eqref{eq:OHME_DQD}, we obtain that
\[
    \begin{aligned}
        \bA_0^O(\bw^O_{eq})\bQ_O(\bw^O_{eq})&=
        ((\bLambda_1^O\bT_O)^{-1}\bD_O(\bw^O_{eq}))^T(
        (\bLambda^O_1\bT_O)^{-1}\bD_O(\bw^O_{eq}))\bQ^O(\bw^O_{eq})\\
        &= ((\bLambda^O_1\bT_O)^{-1}\bD_O(\bw^O_{eq}))^T
        (\bLambda^O_1\bT_O)^{-1}\bQ_O(\bw^O_{eq})\\
        &=\frac{\sqrt{\theta}}{L}\bD_O^T(\bw^O_{eq})(\bLambda^O_1\bT_O)^{-T}\bPb(\bT^{-1}\bar{\bQ}\bT)\bPb^T
        (\bLambda_1^O\bT_O)^{-1}\\
        &=\frac{\sqrt{\theta}}{L}\left((\bLambda_1^O\bT_O)^{-T}\bPb(\bT^{-1}\bar{\bQ}\bT)\bPb^T
        (\bLambda_1^O\bT^O)^{-1}\bD_O(\bw^O_{eq})\right)^T\\
        &=\left((\bLambda_1^O\bT_O)^{-T}(\bLambda_1^O\bT_O)^{-1}\bQ_O(\bw^O_{eq})\bD_O(\bw^O_{eq})\right)^T\\
        &=\left((\bLambda^O_1\bT_O)^{-T}(\bLambda^O_1\bT_O)^{-1}\bQ_O(\bw^O_{eq})\right)^T\\
        &=(\bLambda_1^O\bT_O)^{-T}\bPb(\bT^{-1}\bar{\bQ}\bT)\bPb^T(\bLambda_1^O\bT_O)^{-1}
    \end{aligned}
\]
is symmetric. Analogous to that in Sec. \ref{sec:Yong}, there exists 
an invertible matrix $\bP_O$ subject to both \eqref{eq:Yong1} and
\eqref{eq:Yong3}. 
\end{itemize}
This is the end of the proof.
\end{proof}


\section{Conclusion} \label{sec:conclusion}
The linear stability at the local equilibrium of both HME and OHME has
been proved with commonly used approximate collision terms, and
particularly with Boltzmann's binary collision model. Since HME
and OHME contain almost all hyperbolic regularized Grad's moment
system, the linear stability of almost all Grad-type moment system is
clarified.

Yong's first stability condition is essential to the existence of the
solution of nonlinear first-order hyperbolic with stiff source term.
The positive results in this paper may be helpful for the future study
on the existence of the solution of HME and OHME.

The linearized equation of HME is same as that of Grad's moment
equations at the local equilibrium, so the linear stability
at the local equilibrium
can be shared with the Grad's moment equations. However,
for Grad's moment equations, 
due to the lack of the hyperbolicity, even in the neighborhood of the
local equilibrium, the linear stability can not ensure the existence
of the solution. What's more, Yong's stability condition is stronger
than linear stability, which is satisfied by HME, but not 
Grad's moment equations.   

\section*{Acknowledgements}
Y. Di was supported by the National Natural Science Foundation of
China (Grant No. 11271358). Y. W. Fan was supported in part by the
National Natural Science Foundation of China (Grant
No. 91434201). R. Li was supported in part by the National Natural
Science Foundation of China (Grant No. 91330205, 11421110001,
11421101 and 11325102).


\bibliographystyle{plain}
\bibliography{../article}

\end{document}